\newtheorem{theorem}{Theorem}    
\numberwithin{theorem}{section}       
\newtheorem{lemma}[theorem]{Lemma}               
\newtheorem{corollary}[theorem]{Corollary}
\newtheorem{proposition}[theorem]{Proposition}
\theoremstyle{definition}
\newtheorem{definition}[theorem]{Definition}
\newtheorem{example}[theorem]{Example}
\newtheorem{remark}[theorem]{Remark}
\newtheorem{question}[theorem]{Question}
\newcommand{\AND}{\quad \text{and} \quad}
\newcommand{\C}{\mathbb C}
\renewcommand{\Im}{\operatorname{\sf Im}}
\newcommand{\N}{\mathbb{N}}
\newcommand{\WWs}{\widehat\Omega}
\renewcommand{\P}{\mathcal{Y}}
\newcommand{\Ppi}{\boldsymbol{\Pi}}
\newcommand{\pq}{\textbf{\textit{p}}}
\newcommand{\fq}{\textbf{\textit{f}}}
\newcommand{\PQ}{\textbf{\textit{P}}\!}
\newcommand{\QQ}{\textbf{\textit{Q}}\!}
\newcommand{\R}{\mathbb R}
\renewcommand{\Re}{\operatorname{\sf Re}}
\newcommand{\tb}{\mathbf{t}}
\newcommand\uno{\mathbf{1}}
\newcommand{\wc}{\widecheck}
\newcommand{\wt}{\widetilde}
\newcommand{\ww}{\omega}
\newcommand{\WW}{\Omega}
\author[A. Muranova]{Anna Muranova}
\address{Faculty of Mathematics and Computer Science, University of Warmia 
and Mazury, ul. Sloneczna 54, 10-710 Olsztyn, Poland}
\email{anna.muranova@matman.uwm.edu.pl}
\author[W. Woess]{Wolfgang Woess}
\address{TU Graz, Institut f\"ur diskrete Mathematik, Steyrergasse 30/III,
8010 Graz, Austria} 
 \email{woess@tugraz.at}
\title[Networks with complex weights]{Networks with complex weights: Green function and power series}
\begin{document}

\begin{abstract}
We introduce a Green function and analogues of other related kernels for finite and infinite networks whose edge 
weights are complex-valued admittances with positive real part. We provide comparison 
results with the same kernels associated with corresponding reversible Markov chains, i.e., 
where the edge weights are positive. Under suitable conditions, these lead to comparison of series of 
matrix powers which express those kernels.
We show that the notions of transience and recurrence extend by analytic continuation to 
the complex-weighted case even when the network is infinite. 
Thus, a variety of methods known for Markov chains extend to that setting.
\end{abstract}

\thanks{Partially funded by Austrian Science Fund FWF-P31889-N35. This work started when the 
first author held a position at TU Graz.}

\subjclass[2020] {94C05; 
                  05C22, 
                  31C20} 

\keywords{Weighted graph, network, Green kernel, recurrence, transience}

\maketitle

\section{Introduction}

A finite or countably infinite connected graph whose edges carry positive real weights can be considered as an  electrical network with resistors, and this is closely related with the intensively studied field of random walks on graphs. See \cite{DS}, \cite{Zem}, \cite{Woess00} and \cite{LP}. In \cite{AlonsoRuiz}, \cite{Baez}, \cite{ChenTeplyaev} and \cite{MuranovaThesis}, the wider class of networks with resistors, coils, and capacitors are considered as complex-weighted graphs.  In the present note,  we use the corresponding model from \cite{Muranova1} and \cite{MuranovaThesis}, i.e  we assume that $(V,E)$ is a connected, locally finite graph without loops, where each (non-oriented) edge  $[x,y]$ is equipped with an \emph{admittance}
\begin{equation}\label{eq:admittance}
\rho_s(x,y)=\rho_s(y,x)=\dfrac{s}{L_{xy}s^2+R_{xy}s+D_{xy}}, \quad x,y\in V,
\end{equation}
where $L_{xy},R_{xy},D_{xy}\ge 0$ with $L_{xy}+R_{xy}+D_{xy}> 0$, and $s \in\C$. 
Here, $L_{xy}$ is the inductance, $R_{xy}$ the resistance and $D_{xy}$ the capacitance
of the edge, and $\rho_s(x,y)$ is the inverse of the impedance. 
From a viewpoint of Physics, $s$  is a complex frequency, and the admittance of an edge is
the complex-valued analogue of a conductance. Indeed, when $s>0$ is real, $\rho_s(x,y)$ 
can be interpreted as a conductance of the underlying edge.

In the present paper we consider exclusively the case $s \in \mathbb{H}_r\,$, 
the right half plane consisting of all complex nunbers with $\Re s>0$. 
We set $\rho_s(x,y)\equiv 0$ if $[x,y]$ is not an edge, so that $\rho_s$ is a function on $V^2$. 
We call the couple $(V,\rho_s)$ a \emph{complex (electrical) network}.

We introduce the \emph{admittance operator} $P_s\,$, which acts on functions $f:V\rightarrow \C$ 
as follows:
\begin{equation}\label{eq:Ps}
\begin{gathered}
P_s f(x)=\sum_y p_s(x,y)f(y)\,, 
\quad \text{where}\\ 
p_s(x,y) = \dfrac{\rho_s(x,y)}{\rho_s(x)} \quad\text{with}\quad
\rho_s(x)=\sum_y \rho_s(x,y)\,.
\end{gathered}
\end{equation}
The admittance \eqref{eq:admittance} is a positive-real function, that is, $\Re \rho_s(x,y)>0$ when $\Re s>0\,$; 
see \cite{Brune}, \cite{MuranovaThesis}. Therefore $P_s f(x)$ is well-defined at any vertex of our graph. When  
$s\in\R_+\,$, we see that $P_s$ is a stochastic transition matrix which governs a nearest neighbour random 
walk. This is also true when all the vectors $(L_{xy},R_{xy},D_{xy})$ are collinear (proportional). In particular, 
if they are same on each edge, then $P_s$ is the transition matrix of the simple random walk on the graph, 
independently of $s$.

The main questions addressed in this note are threefold:
\begin{itemize}
 \item How can the concept of transience (resp. recurrence) be formulated$\,$?
 \item In the transient case, how can one construct (the analgoue of) the Green function$\,$?
 \item To which extent can the latter be computed in terms of power series$\,$? 
 \end{itemize}
We analyse the analogues of the different Laplace type equations 
associated with $P_s$ when $s$ is complex, as compared to the well-understood case when
it is real.

We first prove, resp. recall some basic estimates of admittances in Section~\ref{section:estimates}. 
In Section~\ref{section:finiteGreen}, we introduce the Green function for finite networks with 
boundary, a non-empty subset of the vertex set where the network is grounded.
We relate the Green function, resp., the analogues of escape probabilities, with the effective 
impedance defined in \cite{Muranova1}, \cite{MuranovaThesis}. It is convenient to work with the 
inverse of effective impedance, that is, the \emph{ effective admittance,} which corresponds 
to the total amount of current in the electrical network. 
In this context, we provide first comparisons of associated power series with analogous ones for
reversible Markov chains.

Our main effort concerns infinite networks, and in Section~\ref{section:infiniteGreen}, 
we study the effective admittance both in presence of a 
boundary $\partial V \subsetneq V$ 
as well as the effective admittance between a source vertex and infinity. The latter leads to the 
notion of transience, resp. recurrence, and our main result is that this does not depend 
on the parameter $s\in \mathbb{H}_r\,$, and that in the transient
case, one always can construct a Green kernel in extension of the well-understood case when $s > 0$. 
In the final Section~\ref{sec:trees}, show how that Green kernel can be used when the network
is a tree. We construct the Martin kernel and provide a Poisson type integral representation
of all harmonic functions over the boundary at infinity of the tree.
In the specific case of a free group, we have a closer look at the applicability of our
comparison results between the complex network and the ones associated wiith positive real weights.

\section{Inequalities for admittance operators}\label{section:estimates}

\noindent
\textbf{Notational convention.} In the sequel, we shall compare the complex-weighted
admittance operators $P_s$ with non-negative, stochastic transition operators. 
In order to better visualize these different types, we shall use slightly different fonts: 
$\PQ$ and $\pq(x,y)$ will refer to stochastic transition operators -- even though 
$\PQ_s = P_s$ when $s > 0$.

\begin{lemma}\label{lem:estimates}
The admittance \eqref{eq:admittance} of any edge is a positive-real function of $s$. 
The following estimates hold.  
\begin{align}
|\rho_s(x,y)| &\le \dfrac{|s|}{\Re s}\,\rho_{|s|}(x,y),\label{eq:estimateAdm1}
\\
|\rho_s(x,y)| &\le \dfrac{|s|}{\Re s}\,\Re \rho_s(x,y)\,\label{eq:estimateAdm2}
\end{align}
\end{lemma}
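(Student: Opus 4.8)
The plan is to reduce everything to the reciprocal of the admittance, where the three resistive/reactive contributions decouple. Since for a non-edge $[x,y]$ both sides of \eqref{eq:estimateAdm1} and \eqref{eq:estimateAdm2} vanish, fix an edge $[x,y]$, abbreviate $L=L_{xy}$, $R=R_{xy}$, $D=D_{xy}$ (so $L,R,D\ge 0$ and $L+R+D>0$), and fix $s$ with $\Re s>0$. From \eqref{eq:admittance},
\[
\frac1{\rho_s(x,y)}=Ls+R+\frac Ds\,,
\]
and since $\Re(1/s)=\Re s/|s|^2$, $|1/s|=1/|s|$, and $|s|\in\R_+$ makes $\rho_{|s|}(x,y)$ an honest positive conductance with $\rho_{|s|}(x,y)^{-1}=L|s|+R+D/|s|$, I would record the two elementary facts
\[
\Re\frac1{\rho_s(x,y)}=L\,\Re s+R+D\,\frac{\Re s}{|s|^2}
\AND
\Bigl|\frac1{\rho_s(x,y)}\Bigr|\le L|s|+R+\frac D{|s|}=\frac1{\rho_{|s|}(x,y)}\,,
\]
the second by the triangle inequality.

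First I would dispatch the positive-real claim: every term in $\Re\bigl(1/\rho_s(x,y)\bigr)$ is $\ge 0$, and the sum is strictly positive, since $R=0$ forces $L+D>0$ and hence one of $L\,\Re s$, $D\,\Re s/|s|^2$ positive. Thus $1/\rho_s(x,y)\ne 0$ (so $\rho_s(x,y)$ is finite and nonzero, as already needed for $P_s$) and $\Re\rho_s(x,y)=\Re\bigl(1/\rho_s(x,y)\bigr)\big/\bigl|1/\rho_s(x,y)\bigr|^2>0$. Next, the bound $\Re s\le|s|$ gives
\[
\Re\frac1{\rho_s(x,y)}-\frac{\Re s}{|s|}\cdot\frac1{\rho_{|s|}(x,y)}=R\Bigl(1-\frac{\Re s}{|s|}\Bigr)\ge 0\,,
\]
i.e. $\Re\bigl(1/\rho_s(x,y)\bigr)\ge\frac{\Re s}{|s|}\,\rho_{|s|}(x,y)^{-1}$. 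Then \eqref{eq:estimateAdm1} follows at once by chaining $\bigl|\rho_s(x,y)\bigr|^{-1}=\bigl|1/\rho_s(x,y)\bigr|\ge\Re\bigl(1/\rho_s(x,y)\bigr)\ge\frac{\Re s}{|s|}\rho_{|s|}(x,y)^{-1}$ and inverting. For \eqref{eq:estimateAdm2}, I would write $\Re\rho_s(x,y)=\Re\bigl(1/\rho_s(x,y)\bigr)\big/\bigl|1/\rho_s(x,y)\bigr|^2$, bound the numerator below by $\frac{\Re s}{|s|}\rho_{|s|}(x,y)^{-1}$, and then use $\rho_{|s|}(x,y)^{-1}\ge\bigl|1/\rho_s(x,y)\bigr|$ from the modulus estimate to cancel one factor of $\bigl|1/\rho_s(x,y)\bigr|$, arriving at exactly $\frac{\Re s}{|s|}\bigl|\rho_s(x,y)\bigr|$.

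I do not foresee a genuine obstacle: the whole lemma is a short computation once one passes to $1/\rho_s(x,y)$. The only thing to keep straight is that $\Re s>0$ is used in two different roles — to force $\Re(1/\rho_s)>0$, hence well-definedness and the positive-real property, and via $\Re s\le|s|$ to produce the comparison with the real-parameter admittance $\rho_{|s|}$. A more geometric alternative for \eqref{eq:estimateAdm2} is available: $1/\rho_s(x,y)$ is a nonnegative combination of $s$, $1$, $1/s$, whose arguments are $\arg s$, $0$, $-\arg s$, hence lies in the convex cone $\{w:|\arg w|\le|\arg s|\}$ (convex because $2|\arg s|<\pi$), so $|\arg\rho_s(x,y)|\le|\arg s|$ and therefore $\Re\rho_s(x,y)/|\rho_s(x,y)|=\cos\arg\rho_s(x,y)\ge\cos\arg s=\Re s/|s|$, which rearranges to \eqref{eq:estimateAdm2}; I would nevertheless keep the reciprocal computation as the main line, since it also yields \eqref{eq:estimateAdm1} with no extra effort.
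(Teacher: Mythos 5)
Your proof is correct and follows essentially the same route as the paper: pass to $1/\rho_s(x,y)$, combine the lower bound $\Re\bigl(1/\rho_s(x,y)\bigr)\ge\frac{\Re s}{|s|}\,\rho_{|s|}(x,y)^{-1}$ with the triangle-inequality bound $\bigl|1/\rho_s(x,y)\bigr|\le\rho_{|s|}(x,y)^{-1}$, and then chain and invert for \eqref{eq:estimateAdm1}, resp.\ use $\Re z=|z|^2\,\Re(1/z)$ for \eqref{eq:estimateAdm2}. The geometric cone argument you sketch as an alternative for \eqref{eq:estimateAdm2} is a pleasant aside, but your main line coincides with the paper's proof.
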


\begin{proof} We first reconsider the property of being positive-real. 
Note that for any complex number $s\in \C$, $\Re s>0$ if and only if $\Re 1/s > 0$. We have
\begin{equation*}
\Re \frac{1}{\rho_s(x,y)}=L_{xy}\Re s+R_{xy}+D_{xy}\Re\frac{1}{s}>0, \mbox{ whenever }\Re s>0.
\end{equation*}
Next, note that
\begin{equation}\label{eq:trivial}
|\rho_s(x,y)| \ge \rho_{|s|}(x,y). 
\end{equation}
Also note that
$$
\Re\frac{1}{\rho_s(x,y)} = L_{xy} \Re s + R_{xy} + D_{xy} \frac{\Re s}{|s|^2} \ge 
\frac{\Re s}{|s|} \, \frac{1}{\rho_{|s|}(x,y)}.
$$
Therefore
$$
\frac{1}{|\rho_s(x,y)|} \ge \Re \frac{1}{\rho_s(x,y)} 
\ge \frac{\Re s}{|s|} \, \frac{1}{\rho_{|s|}(x,y)}\,,
$$
which proves \eqref{eq:estimateAdm1}
Regarding  \eqref{eq:estimateAdm2}, we use that for $z \in \C$, one has $\Re z = |z|^2 \Re 1/z$.
Thus
\begin{align*}
\Re\rho_s(x,y)&= |\rho_s(x,y)|^2 \,\Re \frac{1}{\rho_s(x,y)} 
\ge |\rho_s(x,y)|^2 \, \frac{\Re s}{|s|}\, \frac{1}{\rho_{|s|}(x,y)}\\ 
&\ge 
|\rho_s(x,y)|\, \frac{\Re s}{|s|}\,,
\end{align*}
and \eqref{eq:estimateAdm2} holds. 
\end{proof}

In addition to the operators (matrices) $P_s$ (resp. $\PQ_s$ when $s > 0$)
we also introduce the 
transition operators $\wt{\!\PQ}_s$ and $\wc{\!\PQ}_s$ with matrix entries
\begin{equation}\label{eq:tildeP}
\wt \pq_{\! s}(x,y) = \dfrac{\Re \rho_s(x,y)}{\Re \rho_s(x)} \AND 
\wc \pq_{\! s}(x,y) = \dfrac{|\rho_s(x,y)|}{|\rho|_s(x)} \,, 
\end{equation}
where $|\rho|_s(x) = \sum_y |\rho_s(x,y)|$. 
From Lemma \ref{lem:estimates}, we get the following comparison.

\begin{proposition}\label{pro:le} For any $s \in \mathbb{H}_r$ and all $x,y \in V$,
$$
\begin{gathered}
|p_s(x,y)| \le  \dfrac{|s|}{\Re s}\, \wt \pq_s(x,y)\,,\quad 
|p_s(x,y)| \le  \dfrac{|s|}{\Re s}\, \wc \pq_s(x,y)\,,
\AND\\
|p_s(x,y)| \le \Bigl(\dfrac{|s|}{\Re s}\Bigr)^2\, \pq_{|s|}(x,y).
\end{gathered}
$$
\end{proposition}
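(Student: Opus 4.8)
The plan is to derive each of the three inequalities directly from Lemma~\ref{lem:estimates} by dividing numerator and denominator bounds appropriately, being careful that the normalizing sums behave correctly under the estimates.

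First I would prove the bound $|p_s(x,y)| \le \frac{|s|}{\Re s}\,\wt\pq_s(x,y)$. Write $|p_s(x,y)| = |\rho_s(x,y)|/|\rho_s(x)|$. For the numerator, apply \eqref{eq:estimateAdm2} to get $|\rho_s(x,y)| \le \frac{|s|}{\Re s}\,\Re\rho_s(x,y)$. For the denominator I need a lower bound on $|\rho_s(x)|$; since $\rho_s(x) = \sum_y \rho_s(x,y)$ and each summand has positive real part, $|\rho_s(x)| \ge \Re\rho_s(x) = \sum_y \Re\rho_s(x,y) = \Re\rho_s(x)$. Combining, $|p_s(x,y)| \le \frac{|s|}{\Re s}\,\frac{\Re\rho_s(x,y)}{\Re\rho_s(x)} = \frac{|s|}{\Re s}\,\wt\pq_s(x,y)$. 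The second inequality $|p_s(x,y)| \le \frac{|s|}{\Re s}\,\wc\pq_s(x,y)$ is easier: the numerator is already $|\rho_s(x,y)|$, and for the denominator $|\rho_s(x)| = |\sum_y\rho_s(x,y)|$, which I would like to bound below by $\frac{\Re s}{|s|}\,|\rho|_s(x) = \frac{\Re s}{|s|}\sum_y|\rho_s(x,y)|$. This follows by applying \eqref{eq:estimateAdm2} termwise inside $\Re\rho_s(x) = \sum_y\Re\rho_s(x,y) \ge \frac{\Re s}{|s|}\sum_y|\rho_s(x,y)|$, together with $|\rho_s(x)| \ge \Re\rho_s(x)$; hence $|\rho_s(x)| \ge \frac{\Re s}{|s|}\,|\rho|_s(x)$ and the claim follows.

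For the third inequality I would combine \eqref{eq:estimateAdm1} for the numerator, giving $|\rho_s(x,y)| \le \frac{|s|}{\Re s}\,\rho_{|s|}(x,y)$, with a lower bound $|\rho_s(x)| \ge \frac{\Re s}{|s|}\,\rho_{|s|}(x)$ for the denominator, which again comes from applying \eqref{eq:estimateAdm1} or rather its consequence $\Re\rho_s(x,y) \ge \frac{\Re s}{|s|}\rho_{|s|}(x,y)$ (visible inside the proof of Lemma~\ref{lem:estimates}) summed over $y$ together with $|\rho_s(x)| \ge \Re\rho_s(x)$. Dividing then yields $|p_s(x,y)| \le \bigl(\frac{|s|}{\Re s}\bigr)^2\,\rho_{|s|}(x,y)/\rho_{|s|}(x) = \bigl(\frac{|s|}{\Re s}\bigr)^2\,\pq_{|s|}(x,y)$.

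The only subtlety, and the thing I would be most careful about, is the passage from the edgewise estimates of Lemma~\ref{lem:estimates} to estimates for the vertex normalizations $\rho_s(x)$, $|\rho|_s(x)$, $\Re\rho_s(x)$: one must use that all admittances have positive real part so that $|{\sum_y}\rho_s(x,y)| \ge \Re{\sum_y}\rho_s(x,y)$, and then estimate the sum of real parts termwise; local finiteness of the graph guarantees these are finite sums, so there are no convergence issues. Everything else is a one-line division. I would present the three cases in sequence, each as a short chain of inequalities, noting once at the start that $|\rho_s(x)| \ge \Re\rho_s(x)$ for every $x$.
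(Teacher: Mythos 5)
Your proof is correct and follows essentially the same route as the paper: both rest on the single chain $|\rho_s(x)| \ge \Re\rho_s(x) = \sum_y\Re\rho_s(x,y) \ge \frac{\Re s}{|s|}\sum_y|\rho_s(x,y)|$ for the denominators, combined with \eqref{eq:estimateAdm2}, resp.\ \eqref{eq:estimateAdm1} and \eqref{eq:trivial}, for the numerators. Your added care about the termwise estimates and local finiteness is sound but does not change the argument.
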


\begin{proof}
Using \eqref{eq:estimateAdm2}, we  obtain
$$
|\rho_s(x)| \ge \Re \rho_s(x) = \sum_y \Re \rho_s(x,y) \ge 
\dfrac{\Re s}{|s|}\sum_y |\rho_s(x,y)| \,,
$$
and the first two of the proposed inequalities follow. 
Combining the above with \eqref{eq:estimateAdm1} and \eqref{eq:trivial} yields the third one.
\end{proof}

Recall that when $s > 0$ is real, $\PQ_s$ the transition operator of a random walk. This also
holds when all three-dimensional vectors 
$(L_{xy}, R_{xy},D_{xy})$ are collinear, in which case $P_s$ is independent of the
value of $s$. We also have the following comparison.

\begin{lemma}\label{lem:real} If $0 < s < t$ (both real) then for all $x,y \in V$
$$
\begin{aligned}
(s/t)\,\rho_t(x,y) &\le \rho_s(x,y) \le (t/s) \, \rho_t(x,y)\,,\quad \text{whence}\\ 
(s/t)^2\,\pq_{t}(x,y) &\le \pq_{s}(x,y) \le (t/s)^2 \, \pq_{t}(x,y)\,.
\end{aligned}
$$
\end{lemma}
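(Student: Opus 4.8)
The plan is to reduce everything to a term-by-term comparison of the reciprocal admittances and then to transfer the resulting bounds to the transition probabilities by summing over neighbours. Fix an edge $[x,y]$ and abbreviate $L=L_{xy}$, $R=R_{xy}$, $D=D_{xy}$, all $\ge 0$ with $L+R+D>0$. Since $s$ is real and positive, \eqref{eq:admittance} gives $1/\rho_s(x,y)=Ls+R+D/s$, and likewise with $t$ in place of $s$. Hence the first line of the statement, $(s/t)\,\rho_t(x,y)\le \rho_s(x,y)\le (t/s)\,\rho_t(x,y)$, is equivalent after inverting to
$$
\frac{s}{t}\Bigl(Lt+R+\frac{D}{t}\Bigr)\ \le\ Ls+R+\frac{D}{s}\ \le\ \frac{t}{s}\Bigl(Lt+R+\frac{D}{t}\Bigr).
$$

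First I would expand the left-hand side as $Ls+\tfrac{s}{t}R+\tfrac{s}{t^2}D$ and compare it summand by summand with $Ls+R+\tfrac{D}{s}$: since $0<s<t$ one has $s/t<1$ and, because $s^2<t^2$, also $s/t^2<1/s$, so each summand on the left is at most the corresponding summand on the right. Symmetrically, expanding the right-hand side as $\tfrac{t^2}{s}L+\tfrac{t}{s}R+\tfrac{D}{s}$ and using $t/s>1$ and $t^2>s^2$ shows each summand there dominates the corresponding one of $Ls+R+\tfrac{D}{s}$. This proves the first line. For the second line, I would sum the first line over all neighbours $y$ of $x$; since all the weights are positive this yields $(s/t)\,\rho_t(x)\le \rho_s(x)\le (t/s)\,\rho_t(x)$ with $\rho_s(x)=\sum_y\rho_s(x,y)$ as in \eqref{eq:Ps}. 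Then, from $\pq_s(x,y)=\rho_s(x,y)/\rho_s(x)$, dividing the upper bound for the numerator by the lower bound for the denominator gives $\pq_s(x,y)\le (t/s)^2\,\pq_t(x,y)$, and dividing the lower bound for the numerator by the upper bound for the denominator gives $\pq_s(x,y)\ge (s/t)^2\,\pq_t(x,y)$.

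I do not expect any real obstacle here; the only points requiring a word of care are that the degenerate cases $L=0$ or $D=0$ (not both, by $L+R+D>0$) are handled automatically since the comparison is done summand by summand and a vanishing coefficient makes the corresponding summand inequality trivial, and that $\rho_s(x)>0$ for every $x$, which follows from $\rho_s(x,y)>0$ on edges together with local finiteness and connectedness ensuring that each vertex has at least one neighbour, so that the quotients defining $\pq_s(x,y)$ are legitimate.
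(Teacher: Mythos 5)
Your proof is correct. It reaches the same two displayed inequalities, but by a different (and more explicit) route than the paper: the paper forms the single ratio $g(L,R,D)=\bigl(Ls+R+D/s\bigr)/\bigl(Lt+R+D/t\bigr)$, normalizes to the simplex $L+R+D=1$, and asserts that the extrema of this fractional-linear function lie at the corners, where the ratio evaluates to $s/t$, $1$, and $t/s$; you instead compare the three summands of $Ls+R+D/s$ directly with the corresponding summands of $\tfrac{s}{t}(Lt+R+D/t)$ and $\tfrac{t}{s}(Lt+R+D/t)$, using only $s/t<1$ and $s^2<t^2$. The two arguments encode the same underlying fact (a ratio of nonnegative linear combinations is squeezed between the extreme termwise ratios, i.e.\ the mediant inequality), but yours is fully elementary and needs no optimization claim, whereas the paper's one-line appeal to ``extrema at the corners'' is left unjustified in detail. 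Your passage from the first line to the second (summing over neighbours and dividing the bounds for numerator and denominator) is exactly the paper's implicit step. One trivial slip: you write that $L=0$ and $D=0$ cannot both occur, but $L=D=0$ with $R>0$ is perfectly admissible; as you yourself note, the summand-by-summand comparison is unaffected, so nothing breaks.
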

 
\begin{proof} This is elementary: for real $L, R, D \ge 0$ with $L+R+D> 0$, consider
the function 
$$
g(L,R,D) = \frac{Ls + R + D/s}{Lt + R + D/t}\,.
$$
For maximising, resp. minimising $g$, it suffices to consider $L+R+D = 1$, and 
one finds that in the simplex $\{ (L,D) : L+D \le 1, L \ge 0, D \ge 0\}$, the extrema of
$g(L, 1-L-D, D)$ lie in the corners, whence the maximum is $t/s$ and the minimum is $s/t$.
\end{proof}
 
\begin{corollary}\label{cor:s-t} For $s \in \mathbb{H}_r$ and $t > 0$, we have for all $x,y \in V$
$$
|p_s(x,y)| \le \frac{1}{(\Re s)^2} \,\max \Bigl\{ \frac{|s|^4}{t^2}\,,\, t^2 \Bigr\}\,
\pq_t(x,y)\,.
$$
\end{corollary}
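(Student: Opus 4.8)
The plan is to reduce the statement to a direct combination of the third inequality in Proposition~\ref{pro:le} with Lemma~\ref{lem:real}, after a case distinction according to whether $|s|\le t$ or $|s|>t$. First I would invoke Proposition~\ref{pro:le} to obtain $|p_s(x,y)|\le\bigl(|s|/\Re s\bigr)^2\,\pq_{|s|}(x,y)$, which moves the problem to comparing the two transition operators $\pq_{|s|}$ and $\pq_t$ associated with the \emph{positive real} parameters $|s|$ and $t$ — exactly the situation governed by Lemma~\ref{lem:real}.

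Next I would split into the two regimes. If $|s|\le t$, then Lemma~\ref{lem:real} applied with the smaller parameter $|s|$ and the larger parameter $t$ gives $\pq_{|s|}(x,y)\le (t/|s|)^2\,\pq_t(x,y)$; inserting this into the bound from Proposition~\ref{pro:le} makes the factors $|s|^2$ cancel and yields $|p_s(x,y)|\le \frac{t^2}{(\Re s)^2}\,\pq_t(x,y)$. If instead $|s|>t$, then Lemma~\ref{lem:real} applied with the smaller parameter $t$ and the larger parameter $|s|$ gives $(t/|s|)^2\,\pq_{|s|}(x,y)\le \pq_t(x,y)$, i.e. $\pq_{|s|}(x,y)\le (|s|/t)^2\,\pq_t(x,y)$; multiplying by $\bigl(|s|/\Re s\bigr)^2$ produces $|p_s(x,y)|\le \frac{|s|^4}{t^2(\Re s)^2}\,\pq_t(x,y)$. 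Since the constant in the first case is $t^2/(\Re s)^2$ and in the second case $|s|^4/(t^2(\Re s)^2)$, taking the maximum of $t^2$ and $|s|^4/t^2$ gives a single bound valid in both regimes, which is the claimed estimate.

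I do not expect a genuine obstacle here: the corollary is essentially a bookkeeping consequence of the two cited results, and the only point needing a word of care is that Lemma~\ref{lem:real} is stated with a strict inequality between its two real parameters, so the endpoint $|s|=t$ should be mentioned separately (or dispatched by continuity). In that endpoint one has $\pq_{|s|}=\pq_t$ and $|s|^4/t^2=t^2$, so both regimes agree and the bound is consistent.
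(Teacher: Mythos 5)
Your argument is correct and is precisely the intended (implicit) proof: the paper states the corollary without proof immediately after Lemma~\ref{lem:real}, expecting the reader to combine the third inequality of Proposition~\ref{pro:le}, namely $|p_s(x,y)|\le(|s|/\Re s)^2\,\pq_{|s|}(x,y)$, with the two-sided comparison of $\pq_{|s|}$ and $\pq_t$ from Lemma~\ref{lem:real}, exactly as you do. Your case distinction, the cancellation of $|s|^2$ in the regime $|s|\le t$, and the remark on the endpoint $|s|=t$ are all in order.
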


(Note the particular case $t=1$.) 
This means that we can investigate some properties of our complex-weighted network via comparison with the corresponding random walks with transition probabilities $p_t(x,y)$, where $t > 0$, or 
$\wt p_s(x,y)$, respectively. 

Notation: in the sequel, we shall write 
$$
\Ppi_+ = \{ \PQ_t\,,\; \wt{\!\PQ}_s\,,\; \wc{\!\PQ}_s : t > 0\,,\; s \in \mathbb{H}_r \}
$$
for the collection of the stochastic matrices that come up in our context, and
$$
\Ppi = \Ppi_+ \cup \{ P_s  : s \in \mathbb{H}_r \}.
$$

\section{The Green function on finite networks with boundaries}\label{section:finiteGreen}

Let $(V,\rho)$ be a finite network. 
We fix a non-empty proper subset $\partial V$ 
of $V$. 
We consider $V^{\circ}=V\setminus \partial V$ 
as the \emph{interior} of our graph. 

If $P = \bigl( p(x,y) \bigr)_{x,y \in V}$ is any real or complex matrix indexed by $V$, then we let 
$$
P_{V^{\circ}} =  \bigl( p(x,y) \bigr)_{x,y \in V^{\circ}} \quad \text{and}\quad
P_{V^{\circ}\!,\,\partial V} =  \bigl( p(x,y) \bigr)_{x\in V^{\circ},\, y \in \partial V}\,.
$$
We write $P_{V^{\circ}}^n =  \bigl( p_{V^{\circ}}^{(n)}(x,y) \bigr)_{x,y \in V^{\circ}}\,$, 
so that in particular, $P_{V^{\circ}}^0 = I_{V^{\circ}}$ is the identity matrix over $V^{\circ}$. 

\begin{definition}\label{def:Green} Whenever the matrix $I_{V^{\circ}} - P_{V^{\circ}}$ is invertible, let
$$
G_{V^{\circ}}^{P} = \bigl( I_{V^{\circ}} - P_{V^{\circ}} \bigr)^{-1}\,. 
$$
Its matrix elements $G_{V^{\circ}}^{P}(x,y)$ are called the \emph{Green function} or 
\emph{Green kernel} of $P$ with respect to the chosen interior $V^{o}$.
\end{definition}

Since each of the stochastic matrices $\PQ \in \Ppi_+$
is irreducible, it is a quite elementary fact that $G_{V^{\circ}}^\PQ$ exists; see e.g.  
\cite[Lemma 2.4]{HiWo}. For the Markov chain with transition matrix $\PQ$ starting at 
vertex $x$, we have that $G_{V^{\circ}}^{\PQ}(x,y)$ is the expected
number of visits in $y$ before leaving the interior $V^{\circ}$. 
Furthermore, it follows from \cite{Muranova3} and \cite{Draper} that also
for complex weights with positive real part,
$I_{V^{\circ}} - P_{s|V^{\circ}}$ is invertible for every $s \in \mathbb{H}_r\,$. 
See in particular the proof of \cite[Theorem 2]{Muranova3}. 

\begin{definition}\label{def:Lap} If $P \in \Ppi$, then the associated 
\emph{normalized weighted Laplacian} 
$\Delta_P$ is the operator acting on functions 
$f:V\rightarrow \C$ by
\begin{equation*}
\Delta_{P} f(x)=\sum_{y:y\sim x}\bigl(f(y)-f(x)\bigr)p(x,y).
\end{equation*}
\end{definition}
 A function $v:V\rightarrow \C$ is 
called \emph{harmonic on $V^{\circ}$ with respect to $\Delta_{P}$} if 
\begin{equation*}
\Delta_{P} \, v(x)=0
\end{equation*}
for any $x\in V^{\circ}$. Now choose $a \in V^{\circ}$ and consider the \emph{augmented boundary} 
$\partial^a V = \partial V \cup \{a\}$ as well as the \emph{reduced interior} $V^a = V^{\circ} \setminus \{ a \}$.
Harmonic functions come up in the following \emph{Dirichlet problem.}
\begin{equation}\label{eq:dirpr}
 \begin{cases}
\Delta_{P} \, v(x)=0 \;\mbox { on }\; V^a,
   \\
   v(a)=1, 
 \\ 
v\raisebox{-.5ex}{$|$}_{\partial V}\equiv 0.
\end{cases}
\end{equation}
Our interest is in $P=P_s$ and the associated Dirichlet problem with complex weights.  
By \cite{Muranova3} and \cite{Draper} this problem has a unique solution $v = v^a$ whenever $\Re s>0$. 
Indeed, the function $v\raisebox{-.5ex}{$|$}_{\partial^a V}$ provides the (augmented) boundary data, 
and the solution can be given in two ways: 
\begin{align}
\hspace*{-3cm}\text{For }\; P = P_s\,,\qquad 
v\raisebox{-.5ex}{$|$}_{V^a} &= G_{V^a}^{P} \,\, 
P_{V^a,\,\partial^a V} \,\,\, v\raisebox{-.5ex}{$|$}_{\partial^a V} \label{eq:Dirsol1} \\
\hspace*{-3cm}\text{and }\phantom{\; P = P_s\,,\qquad} 
v\raisebox{-.5ex}{$|$}_{V^{\circ}} &= G_{V^{\circ}}^{P}(\cdot,a)/G_{V^{\circ}}^{P}(a,a)
\label{eq:Dirsol2}
\end{align}
where (as usual) functions are to be seen as column vectors. Indeed, one easily checks that 
both formulas provide a solution of \eqref{eq:dirpr}, and by uniquenss, they coincide.

The Dirichlet problem has a physical interpretation. In the electrical network model, 
the vertex $a$ is the source, where the potential is kept at $1$, and $\partial V$ is the set 
of grounded nodes. Then $v(x)$ is the complex voltage at the vertex $x$ (for the complex frequency $s$).
This leads to the following definition.

\begin{definition}\label{def:admit}\cite{Muranova1},\cite{Muranova3}\label{ZeffC}
For the finite network $(V,\rho_s)$ with $s \in \mathbb{H}_r\,$,
the \emph{admittance\footnote{In \cite{Muranova3}, the symbol $\mathcal P$ is used for 
the admittance of the network.} 
between the source vertex $a$ and the grounded set $\partial V$} is defined by 
\begin{align*}
\P_s(a\rightarrow \partial V) &= \sum_{x: x \sim a} \bigl(1-v(x)\bigr)\rho_s(a,x)
=\sum_{b\in \partial V}\sum_{y:y\sim b}v(y)\rho_s(y, b)\\ 
&= \frac12 \sum_{x,y \in V} |v(x)-v(y)|^2 \, \rho_s(x,y)\,,
\end{align*}
where $v(x)= v^a(x)$ is the solution of the Dirichlet problem \eqref{eq:dirpr} with 
respect to $\Delta_{P_s}\,$.
\end{definition}

When $s > 0$, this is of course classical, and $\P_s(a\rightarrow \partial V)$
is the inverse of the \emph{total resistance} between $a$ and $\partial V$, while the 
resistance of a single edge is $1/\rho_s(x,y)$.
The following is immediate from formula \eqref{eq:Dirsol2}.

\begin{lemma}\label{lem:solgreen}  \hspace*{2.2cm}
$\displaystyle 
\P_s(a\rightarrow \partial V) = \frac{\rho_s(a)}{G_{V^{\circ}}^{P_s}(a,a)}\,.
$
\end{lemma}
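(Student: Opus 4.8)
The plan is to combine the first expression for $\P_s(a\rightarrow\partial V)$ in Definition~\ref{def:admit} with formula \eqref{eq:Dirsol2}. Write $v=v^a$ for the solution of the Dirichlet problem \eqref{eq:dirpr} with $P=P_s$. Since $\rho_s(a,x)=p_s(a,x)\,\rho_s(a)$ and $\rho_s(a,x)=0$ when $x\not\sim a$, the defining sum may be taken over all $x\in V$, and factoring out $\rho_s(a)$ gives
\[
\P_s(a\rightarrow\partial V)=\rho_s(a)\sum_{x\in V}\bigl(1-v(x)\bigr)p_s(a,x)=\rho_s(a)\bigl(1-P_s v(a)\bigr),
\]
where I use that $P_s$ has unit row sums, which is exactly the normalization $\rho_s(a)=\sum_y\rho_s(a,y)$ built into \eqref{eq:Ps}. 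So the whole matter reduces to showing $1-P_s v(a)=1/G_{V^{\circ}}^{P_s}(a,a)$.

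Next I would insert the boundary behaviour of $v$. By \eqref{eq:dirpr} we have $v\equiv 0$ on $\partial V$, and by \eqref{eq:Dirsol2}, $v(y)=G_{V^{\circ}}^{P_s}(y,a)\big/G_{V^{\circ}}^{P_s}(a,a)$ for $y\in V^{\circ}$; hence the sum $P_s v(a)=\sum_{y\in V}p_s(a,y)v(y)$ only sees the $V^{\circ}$-block of $P_s$, so
\[
P_s v(a)=\frac{1}{G_{V^{\circ}}^{P_s}(a,a)}\bigl(P_{s|V^{\circ}}\,G_{V^{\circ}}^{P_s}\bigr)(a,a).
\]
Then I would invoke the resolvent identity: since $I_{V^{\circ}}-P_{s|V^{\circ}}$ is invertible (as recalled after Definition~\ref{def:Green}, citing \cite{Muranova3}, \cite{Draper}), expanding $I_{V^{\circ}}=(I_{V^{\circ}}-P_{s|V^{\circ}})\,G_{V^{\circ}}^{P_s}$ yields $P_{s|V^{\circ}}\,G_{V^{\circ}}^{P_s}=G_{V^{\circ}}^{P_s}-I_{V^{\circ}}$, so that $\bigl(P_{s|V^{\circ}}\,G_{V^{\circ}}^{P_s}\bigr)(a,a)=G_{V^{\circ}}^{P_s}(a,a)-1$. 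Substituting back gives $P_s v(a)=1-1/G_{V^{\circ}}^{P_s}(a,a)$, and therefore $\P_s(a\rightarrow\partial V)=\rho_s(a)/G_{V^{\circ}}^{P_s}(a,a)$, as claimed.

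There is essentially no obstacle: the argument is a short computation, and the only steps requiring any care are bookkeeping ones — extending the defining sum to all of $V$, using that $P_s$ has unit row sums for complex $s$ as well, and using that $v$ vanishes on $\partial V$ so that $P_s v(a)$ involves only the block $P_{s|V^{\circ}}$. The one genuinely nontrivial ingredient, the invertibility of $I_{V^{\circ}}-P_{s|V^{\circ}}$ for $s\in\mathbb{H}_r$, has already been established in the cited references and may simply be quoted.
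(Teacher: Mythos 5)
Your computation is correct and is exactly the verification the paper has in mind: the authors simply assert the lemma is ``immediate from formula \eqref{eq:Dirsol2}'', and your steps — writing $\P_s(a\to\partial V)=\rho_s(a)\bigl(1-P_sv(a)\bigr)$ via the row-sum normalization, substituting $v(y)=G_{V^{\circ}}^{P_s}(y,a)/G_{V^{\circ}}^{P_s}(a,a)$, and using $P_{s|V^{\circ}}G_{V^{\circ}}^{P_s}=G_{V^{\circ}}^{P_s}-I_{V^{\circ}}$ — are precisely that omitted computation. No gaps; this matches the paper's route.
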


Let us have another look at  \eqref{eq:Dirsol1} and \eqref{eq:Dirsol2}. If
we replace the complex matrix $P_s$ by a stochastic matrix $\PQ \in \Ppi_+$ 
then we have the Markov chain $(X_n)_{n \ge 0}$ with transition matrix $\PQ$.
Given $\partial V$ and $V^{\circ}$, we can consider the stopping time of the first visit in $a$ 
before leaving $V^{\circ}\,$:
$$
\tb^a = \inf \{n \ge 0 : X_n = a\,,\; X_k \in V^{\circ} \; \text{for all}\; k \le n \} \le \infty.
$$
For $x \in V^{\circ}$, set 
$$
F_{V^{\circ}}^{\PQ}(x,a) = \mathbb{P}[\tb^a < \infty \mid X_0=x] = 
\sum_{n=0}^{\infty} \underbrace{\mathbb{P}[\tb^a = n \mid X_0=x]}_{\displaystyle
=:\fq_{V^{\circ}}^{(n)}(x,a)}.
$$
Note that $f_{V^{\circ}}^{(n)}(a,a) = \delta_0(n)$ and that $f_{V^{\circ}}^{(0)}(x,a)= 0$ for $x \in V^a\,$.
It is well-known and easy to prove that
$$
F_{V^{\circ}}^{\PQ}(x,a) = G_{V^{\circ}}^{\PQ}(\cdot,a)/G_{V^{\circ}}^{\PQ}(a,a)\,,
$$
the solution of our Dirichlet problem when $\PQ \in \Ppi_+\,$.
Furthermore, 
\begin{equation}\label{eq:fn}
\fq_{V^{\circ}}^{(n)}(x,a) = \sum_{w \in V^a}\pq_{V^a}^{(n-1)}(x,w)\,\pq(w,a) \quad 
\text{for }\;x \in V^a\,, \; n \ge 1.
%
\end{equation}
The estimates of \S \ref{section:estimates} suggest that we can compare the solution
of \eqref{eq:dirpr} and related items concerning the complex network $(V,\rho)$ with 
the analogous ones for $\PQ \in \Ppi_+\,$. For any $P \in \Ppi$ (i.e., including complex weights), 
we introduce the power series
\begin{equation}\label{eq:Green}
G_{V^a}^P(x,y|z) = \sum_{n=0}^{\infty}p_{V^a}^{(n)}(x,y)\,z^n\,, \quad x,y \in V^a\,,\; z \in \mathbb{C}.
\end{equation}

\begin{proposition}\label{pro:compare}
\emph{(i)} for $\PQ \in \Ppi_+\,$, let 
$\lambda(\PQ_{V^a}) = \max \{ |\lambda| : \lambda \in \textsf{spec}(\PQ_{V^a})\}$.
Then $\lambda(\PQ_{V^a}) < 1$, it is an eigenvalue of $\PQ_{V^a}\,$, and for 
$|z| < 1/\lambda(\PQ_{V^a})$, each of the power series of \eqref{eq:Green} converges absolutely. 
\\[5pt]
\emph{(ii)} If for complex $s \in \mathbb{H}_r\,$, the series 
\begin{equation}\label{eq:series}
v^a(x) = \sum_{n=0}^{\infty} \sum_{y \in V^a} p_{s|V^a}^{(n)}(x,y)p_s(y,a)
\end{equation}
converges absolutely for every $x  \in V^a$, then it is the solution of the Dirichlet problem \eqref{eq:dirpr}.
\\[3pt]
This holds whenever for $t > 0$
\begin{equation}\label{eq:zs}
\begin{aligned}
r_{s,t} &= \frac{1}{(\Re s)^2} \,\max \Bigl\{ \frac{|s|^4}{t^2}\,,\, t^2 \Bigr\}
< \dfrac{1}{\lambda(\PQ_{t|V^a})}
\quad \text{or} \\ 
 r_{s} &= \dfrac{|s|}{\Re s} < \dfrac{1}{\lambda(\,\wt{\!\PQ}_{s|V^a})}\,,\quad 
 \text{or} \quad r_s < \dfrac{1}{\lambda(\,\wc{\!\PQ}_{s|V^a})}\,.
\end{aligned}
\end{equation}
In these cases, the series \eqref{eq:series} is dominated in absolute value by 
$$
\begin{gathered}
\sum_{y \in V^a} G_{V^a}^{\PQ_t}(x,y|r_{s,t})\, \pq_t(y,a)\, r_{s,t}\,, 
\quad\text{resp.}\\
\sum_{y \in V^a} G_{V^{\circ}}^{\,\,\wt{\!\!\PQ}_s}(x,y| r_{s})\, 
\widetilde \pq_s(y,a)\, r_{s}\,, \quad \text{resp.}\quad 
\sum_{y \in V^a} G_{V^{\circ}}^{\,\,\wc{\!\!\PQ}_s}(x,y| r_{s})\, 
\wc \pq_s(y,a)\, r_{s}\,.
\end{gathered}
$$
\end{proposition}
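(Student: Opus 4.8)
The plan is to treat parts (i) and (ii) essentially separately, since (i) is a spectral fact about sub-stochastic matrices and (ii) is a comparison/domination argument that uses (i) together with Corollary~\ref{cor:s-t} and Proposition~\ref{pro:le}.

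For part (i), I would argue as follows. Each $\PQ \in \Ppi_+$ is an irreducible stochastic matrix on the finite set $V$, and $\partial V \neq \emptyset$ is a proper subset, so $\PQ_{V^a}$ is obtained from $\PQ$ by deleting at least the row/column of $a$ (and of all $b \in \partial V$). Since $\PQ$ is irreducible, from every $x \in V^a$ there is a path in $\PQ$ leaving $V^a$, so $\PQ_{V^a}$ is a strictly substochastic, irreducible nonnegative matrix; in particular $\sum_y \pq_{V^a}(x,y) \le 1$ for all $x$ with strict inequality for at least one $x$, and irreducibility propagates this. By Perron--Frobenius for irreducible nonnegative matrices, the spectral radius $\lambda(\PQ_{V^a})$ is itself an eigenvalue with a strictly positive eigenvector, and a standard argument (testing the substochasticity inequality against that positive eigenvector, or citing \cite[Lemma 2.4]{HiWo}) gives $\lambda(\PQ_{V^a}) < 1$. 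Absolute convergence of $\sum_n p_{V^a}^{(n)}(x,y)\,z^n$ for $|z| < 1/\lambda(\PQ_{V^a})$ is then immediate from Gelfand's formula: $\limsup_n \|\PQ_{V^a}^n\|^{1/n} = \lambda(\PQ_{V^a})$, so the entrywise power series has radius of convergence at least $1/\lambda(\PQ_{V^a})$.

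For part (ii), the first claim — that absolute convergence of the series in \eqref{eq:series} forces it to be the solution $v^a$ — I would get by formally applying $I - P_{s|V^a}$ to the series: the telescoping $\sum_{n\ge 0}(P_{s|V^a}^n - P_{s|V^a}^{n+1})P_{s|V^a,\partial^a V}$ collapses (using that the tail $P_{s|V^a}^N \to 0$, which absolute convergence of the series guarantees termwise), showing the series satisfies \eqref{eq:Dirsol1}, hence by the uniqueness already recorded after \eqref{eq:dirpr} it equals $v^a$. The substantive part is the domination. Here I would invoke Proposition~\ref{pro:le}, which gives $|p_s(x,y)| \le r_s\, \wt\pq_s(x,y)$, $|p_s(x,y)| \le r_s\, \wc\pq_s(x,y)$, and $|p_s(x,y)| \le r_{s,t}\, \pq_t(x,y)$ entrywise (with $r_s = |s|/\Re s$ and $r_{s,t}$ as in \eqref{eq:zs}, this last being exactly Corollary~\ref{cor:s-t}). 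Taking $n$-fold products, the restricted matrices satisfy $|p_{s|V^a}^{(n)}(x,y)| \le r_{s,t}^n\, \pq_{t|V^a}^{(n)}(x,y)$ and similarly for the other two (nonnegativity of the comparison matrices is what makes the entrywise bound survive matrix multiplication — one sums $|{\textstyle\sum_w} a_w b_w| \le \sum_w |a_w||b_w|$ against the nonnegative dominating entries). Hence the general term of \eqref{eq:series} is bounded in absolute value by $\sum_{y\in V^a} r_{s,t}^{n}\,\pq_{t|V^a}^{(n)}(x,y)\,\pq_t(y,a)\, r_{s,t}$ (absorbing the factor $|p_s(y,a)| \le r_{s,t}\pq_t(y,a)$), and summing over $n$ gives exactly $\sum_{y\in V^a} G_{V^a}^{\PQ_t}(x,y|r_{s,t})\,\pq_t(y,a)\, r_{s,t}$. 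Under the first condition in \eqref{eq:zs} we have $r_{s,t} < 1/\lambda(\PQ_{t|V^a})$, so part (i) makes this finite, and the same template with $(\wt{\!\PQ}_s, r_s)$ and $(\wc{\!\PQ}_s, r_s)$ handles the other two alternatives.

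The only mild subtlety — and the step I would be most careful about — is the bookkeeping between $V^a$ and $V^{\circ}$: the series \eqref{eq:series} runs over $V^a$ and uses $p_s(y,a)$, whereas two of the three dominating expressions in the statement are written with $G_{V^{\circ}}^{\wt{\!\PQ}_s}$ and $G_{V^{\circ}}^{\wc{\!\PQ}_s}$ rather than $G_{V^a}$; I would double-check that this is consistent with the conventions around \eqref{eq:fn} (where $\fq_{V^{\circ}}^{(n)}(x,a)$ is expressed via $\pq_{V^a}^{(n-1)}$), i.e.\ that it is the reduced interior $V^a$ that actually governs the powers and the Green-function radius in each case, and adjust the notation accordingly. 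Everything else is a routine termwise comparison once Proposition~\ref{pro:le}, Corollary~\ref{cor:s-t} and part~(i) are in hand.
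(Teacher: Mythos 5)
Your overall route is the same as the paper's: Perron--Frobenius for (i), and for (ii) the observation that absolute convergence identifies the series with $\bigl(I_{V^a}-P_{s|V^a}\bigr)^{-1}P_{s|V^a,\partial^a V}$ (i.e.\ formula \eqref{eq:Dirsol1}), followed by entrywise domination via Proposition~\ref{pro:le} and Corollary~\ref{cor:s-t}. Your part (ii), including the telescoping argument and the propagation of the entrywise bound through matrix powers, is a correct and more detailed version of what the paper leaves implicit, and your flag about the $V^a$ versus $V^{\circ}$ bookkeeping in the dominating expressions is well taken.

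There is, however, one concrete misstep in part (i): you assert that $\PQ_{V^a}$ is an \emph{irreducible} nonnegative matrix because $\PQ$ is irreducible. This is false in general -- deleting $a$ and the vertices of $\partial V$ from a connected graph can disconnect what remains, so $\PQ_{V^a}$ is typically only block-reducible. The paper handles this by decomposing the induced subgraph on $V^a$ into its connected components $C_1,\dots,C_k$, applying Perron--Frobenius to each irreducible substochastic block $\PQ_{C_i}$ separately (each has Perron eigenvalue $<1$ since it is substochastic but not stochastic), and taking $\lambda(\PQ_{V^a})=\max_i\lambda(\PQ_{C_i})$; this also yields that $\lambda(\PQ_{V^a})$ is genuinely an eigenvalue, which your argument would otherwise only get from the weaker Perron--Frobenius statement for general (possibly reducible) nonnegative matrices. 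Your Gelfand-formula conclusion about the radius of convergence survives unchanged once the spectral radius is correctly identified, so the repair is purely local, but as written the irreducibility claim is a genuine error rather than a notational slip.
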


\begin{proof} (i) The subgraph induced by ${V^a}$ has one or more connected components 
$C_1\,,\dots, C_k\,$.  
Each of the corresponding sub-matrices $\PQ_{C_i}$ of $\PQ$ is irreducible and 
non-negative,  and these matrices give rise to a block-decomposition of  $\PQ_{V^{\circ}}\,$. 
By the Perron-Frobenius theorem, the spectral radius of $\PQ_{C_i}$ coincides with 
its largest eigenvalue, which is positive real. It is $< 1$, since $\PQ_{C_i}$ is substochastic, 
but not stochastic. The maximum of the Perron-Frobenius eigenvalues of all the matrices 
$\PQ_{C_i}$ is $\lambda(\PQ_{V^a})$, and the Perron-Frobenius theorem also yields absolute 
convergence of $G_{V^a}^{\PQ}(x,y|z)$ for $|z| < 1/\lambda(\PQ_{V^{\circ}})$  and all $x,y \in V^{\circ}$. 

\smallskip

(ii) If $\sum_n p_{s|V^a}^{(n)}(x,y)$ converges absolutely for all $x, y \in V^{\circ}$ then 
the value of the series is $G_{V^{\circ}}^{P_s}(x,y)$, so that $\eqref{eq:series}$ is indeed 
the solution \eqref{eq:Dirsol1} of the Dirichlet problem. The last part of the proposition 
follows from Proposition \ref{pro:le}, resp. Corollary~\ref{cor:s-t}. 
\end{proof}

In statement (ii) above, the most natural choices for $t$ are $t = |s|$ or $t = 1$. 
The advantage of the comparison lies in the possibility to use combinatorial methods
of generating functions and paths for computing the solution of the 
Dirichlet problem \eqref{eq:dirpr}.

\begin{example}\label{ex:ample}
We consider the graph with verex set $V = \{1,2,3,4\}$  as in Figure \ref{fig:example}. We choose 
$s = e^{i\alpha}$ where $|\alpha| < \pi/2$. Along each edge, the label in the figure is its admittance. 

\begin{figure}[H]
\centering
\begin{tikzpicture}[auto,node distance=2.5cm,
                    thick,main node/.style={circle, draw, fill=black!100,
                        inner sep=0pt, minimum width=3pt}]

  \node[main node] (1) [label={[below=3pt]$1$}]{};
  \node[main node] (2) [above of=1,label={[above]$2$}] {};
  \node[main node] (3) [right of=2,label={[above]$3$}] {};
  \node[main node] (4) [below of=3,label={[below=3pt]$4$}] {};

  \path[every node/.style={font=\sffamily\small}]
    (1) edge node [bend right] {$\dfrac{1}{s}$} (2)
    (2) edge node [bend right] {$s$} (3)
    (3) edge node [bend right] {$\dfrac{1}{s}$} (4)
    (4) edge node [bend right] {$s$} (1)
    (1) edge node [bend right] {$s$} (3);

\end{tikzpicture}
\caption{}
\label{fig:example}
\end{figure}

We have 
$$
P_s = \begin{pmatrix} 0 & \frac{1}{2s^2+1} & \frac{s^2}{2s^2+1} & \frac{s^2}{2s^2+1} \\[3pt]
                      \frac{1}{s^2+1} & 0  & \frac{s^2}{s^2+1}  & 0  \\[3pt]
                      \frac{s^2}{2s^2+1}   & \frac{s^2}{2s^2+1} & 0 & \frac{1}{2s^2+1}\\[3pt]
                      \frac{s^2}{s^2+1}    & 0 & \frac{1}{s^2+1}  & 0  
      \end{pmatrix}
\quad \text{and}\quad \PQ_1 = 
      \begin{pmatrix} 0 & \frac13 & \frac13 & \frac13 \\[3pt]
                      \frac12 & 0 & \frac12 & 0 \\[3pt]
                      \frac13 & \frac13 & 0 & \frac13 \\[3pt]
                      \frac12 & 0 & \frac12 & 0       
      \end{pmatrix}.
$$
Also, $\wt{\!\PQ}_s = \wc{\!\PQ}_s = \PQ_1\,$, since $|s|=1$. 
We consider $a=1$ and $\partial V = \{4\}$, so that $V^a=\{2,3\}$. For our comparison, we choose 
$t = |s| = 1$ and write $\PQ = P_1$. Then $\lambda(\PQ_{V^a}) = 1/\sqrt{6}$.  
Also, $r_{s,1} = 1/\cos^2 \alpha$ and $\widetilde r_{s} = 1/\cos \alpha$. The latter is better (smaller) than the former. We see that the comparison of Proposition 
\ref{pro:compare} works whenever $\cos \alpha > 1/\sqrt{6}$. On the other hand, the spectral radius of $P_{s|V^a}$ satisfies
$$
\lambda(P_{s|V^a})^2 = \frac{1}{|2s^2+1| |s^2+1|}
$$
One gets that $|\lambda(P_{s|V^a})| < 1$ if and only if $\cos \alpha > 1/\sqrt{8}$, and precisely in
this case, the series \eqref{eq:series} converges absolutely, while the comparison with 
$\PQ_1$ (resp. $\wt{\!\PQ}_s$ or $\wc{\!\PQ}_s$) is not useful when  
$1/\sqrt{8} < \cos \alpha \le 1/\sqrt{6}$. 
Finally, if $\cos \alpha \le 1/\sqrt{8}$, the series \eqref{eq:series} diverges and cannot be 
used for solving the Dirichlet problem. We also remark that for $t > 0$ comparison with $\PQ_t$ 
yields no improvement when $t \ne 1$.
\end{example}

\section{Admittance and Green function on infinite networks}\label{section:infiniteGreen} 

The above can also be done when the network is infinite. Recall that we assume local finiteness
(each node has finitely many neighbours). In this case, the set $\partial V \subsetneq V$ 
of grounded states may be finite or infinite. It may also be empty, in which case we are considering a  
complex-valued flow from $a$ to $\infty$. (Indeed, the boundary should rather be thought of as
$\partial V \cup \{\infty\}$.) 
We can again consider the power series \eqref{eq:Green}.
When is $\partial V$ non-empty, we get that
\begin{equation}\label{eq:fingrn}
\PQ \in \Ppi_+ \quad \Rightarrow \quad G_{V^{\circ}}^{\PQ}(x,y) = G_{V^{\circ}}^{\PQ}(x,y|1) < \infty\,,
\end{equation}
because $\partial V$ is a set of absorbing states for this Markov chain.  
In addition, we can also consider the unrestricted Green function 
\begin{equation}\label{eq:Gxyz}
G^{\PQ}(x,y|z) = \sum_{n=0}^{\infty}\pq^{(n)}(x,y)\,z^n\,, \quad x,y \in V\,,\; z \in \mathbb{C}\,,
\end{equation}
and when $z=1$ and the series converges absolutely, we write once more $G^{\PQ}(x,y) = G^{\PQ}(x,y|1)$. 
Finiteness of $G^{\PQ}(x,y)$  means that the associated
Markov chain with transition matrix $\PQ \in \Ppi_+$ is \emph{transient:} 
with probability $1$, each vertex is visited only finitely
often by the random process, and finiteness is independent of $x$ and $y$ by connectedness
of the graph. More generally, consider the \emph{spectral radius}
\begin{equation}\label{eq:specrad}
\lambda(\PQ\,) = \limsup_{n \to \infty}\, \pq^{(n)}(x,y)^{1/n}\,. 
\end{equation}
It is well known that this number is independent of $x$ and $y$. It is indeed the spectral radius (norm) of $\PQ$ acting as a self-adjoint operator on $\ell^2(V, m)$, where the weights are 
\begin{equation}\label{eq:m}
\begin{aligned}
m(x) &= \begin{cases}
        \rho_t(x)  &\text{if }\; \PQ=\PQ_t\,,\; t > 0,\\
        \Re \rho_s(x) &\text{if }\; \PQ= \wt{\!\PQ}_s\,,\\
        |\rho|_s(x) &\text{if }\; \PQ= \wc{\!\PQ}_s\,,\; s \in \mathbb{H}_r\,,  
       \end{cases}\\
\| f\|_m^2 &= \sum_{x \in V}  |f(x)|^2\,m(x).
\end{aligned}
       \end{equation}

Furthermore, the radius of convergence
of the power series $G^{\PQ}(x,y|z)$ is $1/\lambda(\PQ\,)$, and at $z = 1/\lambda(\PQ\,)$ the latter either converges for all $x,y$ or diverges for all $x,y$. In the first of those two cases, $\PQ$ is called \emph{$\lambda(\PQ\,)$-transient,} in the second case \emph{$\lambda(\PQ\,)$-recurrent.} See e.g. \cite{Woess09}. 
In the case of a finite network, we have of course $\lambda(\PQ\,) = 1$, and the respective 
Green function diverges at $z=1$. 

\begin{proposition}\label{pro:transrec}
\emph{(a)} The Markov chains induced by $\PQ \in \Ppi_+$ are either all recurrent or all
transient.
\\[5pt]
\emph{(b)} We either have $\lambda(\PQ\,)=1$ for all $\PQ \in \Ppi_+$ or  
$\lambda(\PQ\,)<1$ for all $\PQ \in \Ppi_+\,$.
\end{proposition}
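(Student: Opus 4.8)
The plan is to reduce both parts to a single principle. Every $\PQ\in\Ppi_+$ is the transition operator of a reversible Markov chain on $(V,E)$ with symmetric \emph{conductances} $c^{\PQ}(x,y)$ equal to $\rho_t(x,y)$ if $\PQ=\PQ_t$ ($t>0$), to $\Re\rho_s(x,y)$ if $\PQ=\wt{\!\PQ}_s$, and to $|\rho_s(x,y)|$ if $\PQ=\wc{\!\PQ}_s$; in each case the weight $m(x)$ of \eqref{eq:m} equals $\sum_y c^{\PQ}(x,y)$ and $\pq(x,y)=c^{\PQ}(x,y)/m(x)$. First I would record that any two of these conductance functions are comparable up to a global constant: Lemma~\ref{lem:real} shows that $\rho_t(x,y)$ and $\rho_{t'}(x,y)$ are comparable with a constant depending only on $t,t'>0$, while \eqref{eq:trivial}, \eqref{eq:estimateAdm1} and \eqref{eq:estimateAdm2} give $\rho_{|s|}(x,y)\le|\rho_s(x,y)|\le\tfrac{|s|}{\Re s}\,\rho_{|s|}(x,y)$ and $\tfrac{\Re s}{|s|}\,|\rho_s(x,y)|\le\Re\rho_s(x,y)\le|\rho_s(x,y)|$. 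Chaining these, for any $\PQ,\QQ\in\Ppi_+$ there is a constant $M=M(\PQ,\QQ)\ge1$ with $M^{-1}c^{\QQ}(x,y)\le c^{\PQ}(x,y)\le M\,c^{\QQ}(x,y)$ for all $x,y\in V$, and hence also $M^{-1}m_{\QQ}(x)\le m_{\PQ}(x)\le M\,m_{\QQ}(x)$.

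For part~(a) I would assume $V$ infinite, the finite case being trivial. Here I invoke the classical fact that a reversible chain with conductances $c$ is recurrent if and only if $\inf_f\tfrac12\sum_{x,y}c(x,y)|f(x)-f(y)|^2=0$, the infimum running over finitely supported $f:V\to\C$ with $f(o)=1$ for a fixed vertex $o$ (this quantity is the capacity of $o$; see \cite{Woess00}, \cite{LP}). For every such $f$ the Dirichlet energies formed with $c^{\PQ}$ and with $c^{\QQ}$ differ only by a factor in $[M^{-1},M]$, so the two capacities vanish together; hence $\PQ$ is recurrent if and only if $\QQ$ is.

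For part~(b) I would use that, as recalled in the text above, $\lambda(\PQ)$ equals the norm of the self-adjoint operator $\PQ$ on $\ell^2(V,m)$, whose spectrum lies in $[-1,1]$. Hence $\lambda(\PQ)<1$ if and only if that spectrum is contained in $(-1,1)$, which in turn is equivalent to the existence of $\varepsilon>0$ with $\langle(I\mp\PQ)f,f\rangle_m\ge\varepsilon\,\|f\|_m^2$ for every finitely supported $f$ and both choices of sign, where $\langle f,g\rangle_m=\sum_x f(x)\overline{g(x)}\,m(x)$. A short computation using reversibility and $m(x)=\sum_y c^{\PQ}(x,y)$ rewrites the left-hand side as $\tfrac12\sum_{x,y}c^{\PQ}(x,y)|f(x)\mp f(y)|^2$ and $\|f\|_m^2$ as $\sum_{x,y}c^{\PQ}(x,y)|f(x)|^2$, all of which are sums of non-negative terms. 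Passing from $c^{\PQ}$ to $c^{\QQ}$ scales each of these sums by a factor in $[M^{-1},M]$, so the gap condition holds for $\PQ$ if and only if it holds for $\QQ$, with $\varepsilon$ replaced by $\varepsilon/M^2$; therefore $\lambda(\PQ)<1\iff\lambda(\QQ)<1$, which is the dichotomy of~(b).

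I expect part~(b) to be the main obstacle. One cannot compare the Green functions $G^{\PQ}(x,y|1)$ or the iterated probabilities $\pq^{(n)}(x,y)$ directly, since a pointwise bound $\pq(x,y)\le M\,\qq(x,y)$ only yields the useless factor $M^n$ after $n$ steps; a comparison that is uniform in $n$ is available only at the level of the quadratic forms above. One must also control the bottom of the spectrum — via the form $\langle(I+\PQ)f,f\rangle_m$ — and not merely its top, because near infinity the graph may behave like a bipartite one; this is why the comparison has to be built from the two non-negative forms $|f(x)\mp f(y)|^2$ rather than from $\langle\PQ f,f\rangle_m$ itself, which is subject to cancellation.
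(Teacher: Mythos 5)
Your proposal is correct and follows essentially the same route as the paper: both parts are reduced to the fact that the three conductance families $\rho_t$, $\Re\rho_s$, $|\rho_s|$ are mutually comparable up to multiplicative constants, combined with characterizations of recurrence and of $\lambda(\PQ\,)<1$ in terms of conductance-weighted quadratic forms (the paper cites \cite[Cor.~2.14]{Woess00} and \cite[Thm.~10.3]{Woess00} for these). The only cosmetic difference is in (b), where you derive the spectral-gap criterion from the spectral theorem using both forms $\langle(I\mp\PQ)f,f\rangle_m$, whereas the cited theorem shows that the one-sided bound $\|f\|_m^2\le\kappa\,D_{\PQ}(f)$ alone already suffices (the bound for the $+$ sign follows since $|f(x)+f(y)|^2\ge\bigl(|f(x)|-|f(y)|\bigr)^2$).
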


\begin{proof} Let $\ell_0(V)$ be the space of all finitely supported real or complex
functions on $V$. With $m(x)$ as in \eqref{eq:m}, the \emph{Dirichlet sum} associated
with $\PQ \in \Ppi_+$ is
$$
D_{\!\PQ\,}(f) = \frac12 \sum_{x,y \in V} |f(x)-f(y)|^2\, m(x)\pq(x,y).
$$
By Proposition \ref{pro:le} and Corollary~\ref{cor:s-t}, the Dirichlet sums associated
with distinct $\PQ, \QQ \in \Ppi_+$ compare above and below by positive multiplicative constants.
Thus, by \cite[Corollary 2.14]{Woess00} (to cite one among various sources), transience of $\PQ$
implies transience of $\QQ$ and vice versa. This proves (a).
 
Regarding the spectral radius, by \cite[Theorem 10.3]{Woess00} (once more to cite one among various sources) we have $\lambda(\PQ\,) < 1$ if and only if there is $\overline \kappa > 0$ such that
$$
\|f\|_m^2 \le \overline \kappa \, D_{\!\PQ\,}(f) \quad \text{for all }\; f \in \ell_0(V).
$$
Besides the Dirichlet forms, also the weights $m$ with respect to different $\PQ, \QQ \in \Ppi_+$ 
compare up to positive multiplicative constants. This yields (b). 
\end{proof}

Let us now consider the effective admittance of our infinite network, as defined in 
\cite{Muranova3} and \cite{MuranovaThesis}. 
Let $V_n = \{ x \in V : d(x,x_0) \le n \}$ be the ball of radius $n$ around a choosen 
the root vertex $x_0$ with respect to the integer-valued graph metric, and let $E_n$ be 
the set of edges whose endpoints lie in $V_n\,$. Thus, $(V_n\,,E_n)$ is the subgraph 
of $(V,E)$ induced by $V_n$. We write $(V_n\,,\rho_s)$ for the resulting finite sub-network
of $(V,\rho_s)$, where more precisely, the admittance function $\rho_s = \rho_{s,n}$ is the restriction 
of the given one to $E_n\,$.
Given the 
set of grounded states $\partial V$ in the infinite network, as well as 
an input node $a \in V \setminus \partial V$, we take $n$ large enough 
such that $a \in V_{n-1}$ and define 
\begin{equation}\label{eq:bdVn}
\partial V_n = (V_{n-1} \cap \partial V) \cup S_n\,, 
\quad \text{where}\quad S_n = V_n \setminus V_{n-1}\,. 
\end{equation}
Now let $v_n(x) = v_{s,n}^a(x)$ be the unique solution of the Dirichlet problem 
\ref{eq:dirpr} on $(V_n,\rho_s)$ with respect to $(a,\partial V_n)$, as given by 
\eqref{eq:Dirsol1} and \eqref{eq:Dirsol2}.
Its dependence on $s$ is important here. The associated 
effective admittance is
$$
\P_s(a \rightarrow \partial V_n) =
\sum_{x: x \sim a} \bigl(1 - v_n(x)\bigr)\rho_s(a,x).
$$
\begin{proposition}\label{pro:converge}\cite[Thm. 22]{Muranova3}  \label{pro:limit} As a function of $s \in \mathbb{H}_r\,$, the sequence $\bigl(\P_s(a \rightarrow \partial V_n)\bigr)_n$ converges locally uniformly to a 
holomorphic function:
$$
\P_s(a \rightarrow \partial V \cup \{\infty\}) 
= \lim_{n \to \infty} \P_s(a \rightarrow \partial V_n).
$$
The limit is the \emph{effective admittance} of the infinite network.
\end{proposition}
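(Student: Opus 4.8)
The plan is to recognise each finite–level effective admittance $s\mapsto\P_s(a\to\partial V_n)$ as a holomorphic self-map of the right half-plane, invoke normality of such maps, and identify the limit from its restriction to the positive real axis, where everything reduces to a classical monotonicity.

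\emph{Step 1: each $\P_s(a\to\partial V_n)$ is holomorphic on $\mathbb{H}_r$ with $\Re>0$.} First I would note that the entries of $P_{s|V_n^{\circ}}$, where $V_n^{\circ}=V_n\setminus\partial V_n=V_{n-1}\cap V^{\circ}$, are rational in $s$ with denominators $\rho_s(x)$ and $L_{xy}s^2+R_{xy}s+D_{xy}$; none of these vanishes on $\mathbb{H}_r$, since $\Re\rho_s(x)>0$ by Lemma~\ref{lem:estimates} and $\Re\bigl((L s^2+Rs+D)/s\bigr)=L\Re s+R+D\,\Re(1/s)>0$. As $I_{V_n^{\circ}}-P_{s|V_n^{\circ}}$ is invertible for every $s\in\mathbb{H}_r$ (\cite{Muranova3},\cite{Draper}), Cramer's rule makes its inverse, hence the Dirichlet solution $v_n$ through \eqref{eq:Dirsol1}--\eqref{eq:Dirsol2}, hence $\P_s(a\to\partial V_n)$, holomorphic on $\mathbb{H}_r$. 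For the real part I would use the last expression of Definition~\ref{def:admit} applied to the finite network $(V_n,\rho_s)$:
$$
\Re\,\P_s(a\to\partial V_n)=\frac12\sum_{x,y\in V_n}|v_n(x)-v_n(y)|^2\,\Re\rho_s(x,y)>0,
$$
the strictness coming from $\Re\rho_s(x,y)>0$ (Lemma~\ref{lem:estimates}) together with $v_n$ being non-constant ($v_n(a)=1\neq 0=v_n|_{\partial V_n}$, $V_n$ connected). Thus each $\P_\cdot(a\to\partial V_n)$ is a holomorphic map $\mathbb{H}_r\to\mathbb{H}_r$.

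\emph{Step 2: normality and convergence on $\R_+$.} Conjugating by a Cayley transform $\mathbb{H}_r\to\mathbb D$ turns each $\P_\cdot(a\to\partial V_n)$ into a holomorphic self-map of the unit disk, so Montel's theorem makes the family normal: every subsequence has a sub-subsequence converging locally uniformly on $\mathbb{H}_r$ to a limit $g$ that is either again a holomorphic map $\mathbb{H}_r\to\mathbb{H}_r$ or a constant in $i\R\cup\{\infty\}$ (the boundary of $\mathbb{H}_r$ on the Riemann sphere). To pin $g$ down I would examine real $t>0$: there $P_t=\PQ_t$, and Lemma~\ref{lem:solgreen} applied to $(V_n,\rho_t)$ gives $\P_t(a\to\partial V_n)=\rho_t(a)/G^{\PQ_t}_{V_n^{\circ}}(a,a)$, where $G^{\PQ_t}_{V_n^{\circ}}(a,a)$ is the expected number of visits to $a$ by the $\PQ_t$-chain before leaving $V_n^{\circ}$. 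Since $V_n^{\circ}=V_{n-1}\cap V^{\circ}$ increases with $n$, this expectation is non-decreasing, so $\P_t(a\to\partial V_n)$ is non-increasing in $n$ and bounded below by $0$; hence it converges to some finite $L(t)\ge0$ for every $t>0$ — this is just Rayleigh monotonicity for the reversible chain $\PQ_t$, cf.\ \cite{Woess00}.

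\emph{Step 3: conclusion, and the main obstacle.} Any subsequential limit $g$ above satisfies $g(t)=L(t)\in[0,\infty)$ for all $t>0$, so $g$ is finite and real on $\R_+$; this excludes $g\equiv\infty$ and $g\equiv c$ with $c\in i\R\setminus\{0\}$, leaving $g$ either positive-real or $\equiv0$ — in either case holomorphic on $\mathbb{H}_r$ — and uniquely determined by $L$ via the identity theorem, since $\R_+$ has accumulation points in $\mathbb{H}_r$. All subsequential limits therefore coincide, so the whole sequence $\bigl(\P_s(a\to\partial V_n)\bigr)_n$ converges locally uniformly on $\mathbb{H}_r$ to that holomorphic function, which is then the effective admittance $\P_s(a\to\partial V\cup\{\infty\})$. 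The genuinely delicate point is the strict inequality $\Re\,\P_s(a\to\partial V_n)>0$ in Step~1: once it is in hand, normality yields local uniform bounds for free and, crucially, no estimate uniform in $n$ is ever needed. By contrast, a direct attempt to bound the power series of \S\ref{section:finiteGreen} uniformly in $n$ only succeeds where $r_{s,t}<1/\lambda(\PQ_{t|V^a})$, and not at all in the recurrent case, so the half-plane normal-families route is the natural one.
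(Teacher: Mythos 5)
Your argument is correct, and it has the same overall architecture as the cited proof (\cite[Thm.~22]{Muranova3}, which is mirrored inside this paper in the proof of Theorem~\ref{thm:converge}): precompactness of the sequence via Montel's theorem, identification of the limit on the positive real half-axis by monotone convergence of the Green function of the reversible chain, and uniqueness via the identity theorem. The one genuine variation is the source of normality. The reference works with an explicit locally uniform bound $|\P_s(a\to\partial V_n)|\le \frac{|s|^2(1+|s|^2)}{(\Re s)^3}\,\Re\rho_1(a)$ (quoted later in the proof of Theorem~\ref{thm:converge}) and applies Montel for locally bounded families; you instead observe that the energy identity of Definition~\ref{def:admit} gives $\Re\,\P_s(a\to\partial V_n)>0$, so that each $\P_\cdot(a\to\partial V_n)$ is a holomorphic self-map of $\mathbb{H}_r$, and invoke normality of such maps after a Cayley transform. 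Your route avoids any quantitative estimate, at the small cost of having to exclude the degenerate subsequential limits (the constant $\infty$ and nonzero purely imaginary constants), which you do correctly using that the limits $L(t)$ on $\R_+$ are finite and nonnegative; the constant $0$ is legitimately retained as the recurrent case. All the supporting facts you use (invertibility of $I_{V_n^{\circ}}-P_{s|V_n^{\circ}}$ on $\mathbb{H}_r$, the energy identity, Lemma~\ref{lem:solgreen}, monotonicity of $G^{\PQ_t}_{V_n^{\circ}}(a,a)$ in $n$ since $V_n^{\circ}=V_{n-1}\cap V^{\circ}$ increases) are available and correctly applied.
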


We are led to the following.

\begin{definition}\label{def:transience}
Given the parameters $(L_{xy},R_{xy},D_{xy})$ and the admittances $\rho^{(s)}(x,y)$ on all edges 
of $(V,E)$, where $s \in \mathbb{H}_r\,$, the infinite network $(V, \rho^{(s)})$ is called \emph{transient,} if $\P_s(a \rightarrow \infty) \ne 0$ for some source vertex $a \in V$. Otherwise, it is called 
\emph{recurrent.}
\end{definition}

The definition is motivated by the case $s > 0$, in which case we know that $\PQ_s$ is 
the transition matrix of a reversible Markov chain, or equivalently, a resistive network, 
were the edge resistances are $L_{xy}s+R_{xy}+D_{xy}/s\,$. This Markov chain is transient
(i.e., it tends to $\infty$ almost surely)
if and only if the effective conductance from any vertex $a$ to infinity is positive 
(equivalently, the effective resistance is finite). Based on the previous results of 
\cite{Muranova3}, the following is now quite easy to prove, but striking.

\begin{theorem}\label{thm:transience}
\emph{(a)} Transience (resp., recurrence) is independent of the source vertex $a$ as well as of 
the parameter $s \in \mathbb{H}_r$.
\\[5pt] 
\emph{(b)} If the (finite) set $\partial V$ of grounded nodes is non-empty, then we have 
$\P_s(a \rightarrow \partial V \cup \{\infty\}) \ne 0$ for all $a \in V$ and $s \in \mathbb{H}_r\,$.
\end{theorem}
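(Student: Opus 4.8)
The plan is to reduce both parts to the classical real case $s=t>0$ — where $P_t=\PQ_t$ is the transition operator of a reversible Markov chain — by applying Hurwitz's theorem to the finite truncations. Fix a source vertex $a$ and take $n$ so large that $a\in V_{n-1}$ and all neighbours of $a$ lie in $V_n$; then the interior of the finite subnetwork $(V_n,\rho_s)$ is $V_n^\circ=V_n\setminus\partial V_n=V_{n-1}\setminus\partial V$, and the finite and infinite networks agree on $V_{n-1}\supseteq V_n^\circ$. By Lemma~\ref{lem:solgreen} applied to $(V_n,\rho_s)$,
$$
\P_s(a\to\partial V_n)=\frac{\rho_s(a)}{G_{V_n^\circ}^{P_s}(a,a)}\,,
$$
where $\rho_s(a)$ is holomorphic and non-vanishing on $\mathbb{H}_r$ (its real part is a sum of strictly positive terms over the non-empty neighbour set of $a$), and $G_{V_n^\circ}^{P_s}(a,a)=\bigl[(I_{V_n^\circ}-P_{s|V_n^\circ})^{-1}\bigr]_{a,a}$ is holomorphic on $\mathbb{H}_r$ by Cramer's rule together with the invertibility of $I_{V_n^\circ}-P_{s|V_n^\circ}$ recalled below Definition~\ref{def:Green}; since the left-hand side above is a finite number by Definition~\ref{def:admit}, this matrix entry is moreover nowhere zero on $\mathbb{H}_r$. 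Hence each $s\mapsto\P_s(a\to\partial V_n)$ is holomorphic and nowhere vanishing on $\mathbb{H}_r$, and by Proposition~\ref{pro:converge} these functions converge locally uniformly to $\P_s(a\to\partial V\cup\{\infty\})$. Hurwitz's theorem then gives, for each fixed $a$, the dichotomy: the limit function is either identically zero on $\mathbb{H}_r$ or nowhere zero on $\mathbb{H}_r$.

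It remains to decide which alternative occurs, and for this I would evaluate at real $s=t>0$. There $G_{V_n^\circ}^{\PQ_t}(a,a)$ is the expected number of visits to $a$ made by the $\PQ_t$-walk before it leaves $V_n^\circ$, and as $n\to\infty$ these increase, by monotone convergence, to the expected number of visits to $a$ before the walk first hits $\partial V$ (or, when $\partial V=\emptyset$, for ever). For part~(a), where $\partial V=\emptyset$, this limit equals $\sum_n\pq_t^{(n)}(a,a)=G^{\PQ_t}(a,a)$, which is finite exactly when $\PQ_t$ is transient; thus $\P_t(a\to\infty)=0$ iff $\PQ_t$ is recurrent. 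By Proposition~\ref{pro:transrec}(a) this holds for one $t>0$ iff for all $t>0$, and it is manifestly independent of $a$. Combining with the Hurwitz dichotomy for each $a$: either the $\PQ_t$ are recurrent and $\P_s(a\to\infty)\equiv 0$ on $\mathbb{H}_r$ for every $a$, or they are transient and $\P_s(a\to\infty)\ne 0$ for every $a$ and every $s\in\mathbb{H}_r$. In either case transience of $(V,\rho^{(s)})$ reduces to transience of $\PQ_1$, which is part~(a). For part~(b), with $\emptyset\ne\partial V$, the same monotone limit at $t=1$ equals $1/(1-u)$, where $u$ is the probability that the $\PQ_1$-walk started at $a$ returns to $a$ before hitting $\partial V$; here $u<1$, because a geodesic from $a$ to $\partial V$ visits $\partial V$ only at its last vertex and $a$ only at its first, hence realises an excursion of positive probability that reaches $\partial V$ without an intermediate return to $a$. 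Therefore $\P_1(a\to\partial V\cup\{\infty\})>0$, the limit function is not identically zero, and by the dichotomy $\P_s(a\to\partial V\cup\{\infty\})\ne 0$ for all $s\in\mathbb{H}_r$, which is part~(b).

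The single genuinely delicate point is the transient direction of part~(a) and the whole of part~(b): holomorphy and vanishing on the positive real axis do not by themselves propagate non-vanishing off $\R_+$. What saves the argument is that the approximants $\P_s(a\to\partial V_n)$ are not merely holomorphic but \emph{nowhere vanishing} on all of $\mathbb{H}_r$ — a fact that already packages Muranova's solvability of the complex Dirichlet problem for finite networks — so that Hurwitz's theorem can upgrade ``non-zero at $s=1$'' to ``non-zero everywhere''. An alternative route is to use that each finite-network effective impedance $1/\P_s(a\to\partial V_n)$ is a positive-real function of $s$ and hence has strictly positive real part; passing to the limit shows $\P_s(a\to\partial V\cup\{\infty\})$ has non-negative real part, and the minimum principle for harmonic functions, together with its strict positivity at real $s$, again forces the stated dichotomy. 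Everything else — the bookkeeping with the boundaries $\partial V_n$ of the finite truncations, and the standard facts about Green functions of reversible Markov chains — is routine.
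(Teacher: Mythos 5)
Your argument is correct and follows essentially the same route as the paper: both pass to the finite truncations, observe that the effective admittances $\P_s(a\to\partial V_n)$ are holomorphic and nowhere zero on $\mathbb{H}_r$, invoke Proposition~\ref{pro:converge} and Hurwitz's theorem to get the dichotomy ``identically zero or nowhere zero'', and then resolve it by evaluating at real $s=t>0$, where the admittance is $\rho_t(a)/G^{\PQ_t}(a,a)$ (resp.\ $\rho_t(a)/G_{V^{\circ}}^{\PQ_t}(a,a)$) and Proposition~\ref{pro:transrec} together with \eqref{eq:fingrn} settles the matter. Your extra detail on why the approximants are nonvanishing (best read as $\P_s(a\to\partial V_n)\cdot G_{V_n^{\circ}}^{P_s}(a,a)=\rho_s(a)\ne 0$) only makes explicit what the paper packages into the phrase ``real-positive holomorphic functions''.
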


\begin{proof} By Proposition \ref{pro:limit}, 
$\P_s(a \rightarrow \partial V \cup \{\infty\})$  is the locally uniform limit of a sequence of real-positive holomorphic functions of the 
variable $s \in\mathbb{H}_r$. Hence it is holomorphic on the right half plane. 
By Hurwitz' Theorem (see e.g. \cite[p. 178]{Ahl}), it is either nowhere zero or 
constant equal to zero on $\mathbb{H}_r$.

\smallskip 

(a) Suppose that $\partial V = \emptyset$.
We know already from Proposition \ref{pro:transrec} that for real $s > 0$, 
transience of the reversible Markov chain with transition matrix $\PQ=\PQ_s$ in independent of $s$.
In this case it is well known that the effective admittance (or rather conductance in this situation)
is $\rho_s(a)/G^{\PQ}(a,a)$. 
Transience then means that $G^{\PQ}(a,a) < \infty\,$. It is also well known that in this case, 
$G^{\PQ}(x,y) < \infty$ for all $x,y \in V$. See e.g. \cite{Woess09} or \cite{LP}. 

Thus, when $G^{\,\PQ_s}(a,a) < \infty$ for some $s > 0$ and $a \in V$, then one also has 
$\P_s(a \rightarrow \infty) \ne 0$ for all $s\in \mathbb{H}_r$ and all $a \in V$. 

\smallskip

The proof of (b) is analogous: when $s > 0$, then $G_{V^{\circ}}^{\PQ_s}(a,a) < \infty$ 
for all $a \in V^{\circ}$,
as observed in \eqref{eq:fingrn}. Again, in this case, the effective admittance
is $\rho(a)/G_{V^{\circ}}^{\PQ_s}(a,a)$, and the extension to complex $s \in \mathbb{H}_r$ 
works as in (a).
\end{proof}

In the transient case (with $\partial V = \emptyset$), if $\PQ=\PQ_s$ for $s > 0$, 
we have by monotone convergence 
$$
G^{\PQ}(a,a) = \frac{\rho_s(a)}{\P_s(a \rightarrow \infty)} = \lim_{n \to \infty} G^{\PQ}_{V_n^{\circ}}(a,a)
$$
and 
$$
F^{\PQ}(x,a) = \lim_{n \to \infty} F^{\PQ}_{V_n^{\circ}}(x,a)\,,
$$
where $F^{\PQ}_{V_n^{\circ}}(x,a) = v_{s,n}(x)$, the solution of the corresponding Dirichlet problem with
source node $a$ and grounded set $\partial V_n\,$, see above. The analogous statement is true
for $s > 0$, when $\partial V$ is non-empty.

In the general case of complex  $s \in \mathbb{H}_r\,$, it is natural to 
\emph{define} the on-diagonal Green kernel by
\begin{equation}\label{eq:diag}
\begin{aligned}
G^{P_s}(a,a) &= \frac{\rho_s(a)}{\P_s(a \rightarrow \infty)} 
\qquad \text{in the transient case, resp.}\\[5pt]
G_{V^{\circ}}^{P_s}(a,a) &= \frac{\rho_s(a)}{\P_s(a \rightarrow \partial V \cup \{\infty\})}
\qquad \text{when }\; \partial V \ne \emptyset.
\end{aligned}
\end{equation}

We shall unify notation, writing $G_{V^{\circ}}^{P_s}(a,a)$ in both cases of \eqref{eq:diag},
so that the index $V^{\circ}$ can be omitted when $\partial V = \emptyset$. This also applies to
the following consideration of the off-diagonal elements. 

\begin{theorem}\label{thm:converge}
For complex $s \in \mathbb{H}_r\,$,
$$
F^{P_s}_{V^{\circ}}(x,a) = \lim_{n \to \infty} 
 F^{P_s}_{V_n^{\circ}}(x,a)  \quad \Bigl( = \lim_{n \to \infty} v_{s,n}^a(x)\ \Bigr)
$$
exists for all $a, x \in V$ and defines a holomorphic function of $s$.
\end{theorem}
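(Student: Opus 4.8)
The plan is to mimic the strategy used for $\P_s(a\to\partial V\cup\{\infty\})$ in the proof of Theorem~\ref{thm:transience}, namely to realize $F^{P_s}_{V^{\circ}}(x,a)$ as a locally uniform limit of holomorphic functions and invoke the standard fact that such limits are holomorphic. First I would fix the source $a$ and a target vertex $x$, take $n$ large enough that both lie in $V_{n-1}$, and recall that $v^a_{s,n}(x) = F^{P_s}_{V_n^{\circ}}(x,a)$ is the solution of the Dirichlet problem \eqref{eq:dirpr} on the finite network $(V_n,\rho_s)$ with grounded set $\partial V_n$ from \eqref{eq:bdVn}; by \eqref{eq:Dirsol2} it equals $G_{V_n^{\circ}}^{P_s}(x,a)/G_{V_n^{\circ}}^{P_s}(a,a)$, and by Cramer's rule (ratios of determinants of $I-P_{s|V_n^{\circ}}$ with a column replaced) this is a rational, hence holomorphic, function of $s$ on $\mathbb{H}_r$ — the denominator being nonzero there by the invertibility statement cited from \cite{Muranova3}, \cite{Draper}.

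Next I would establish convergence. For a single fixed $x$ this can be read off from the convergence already granted: by Lemma~\ref{lem:solgreen}/Definition~\ref{def:admit} applied inside $V_n$ one has $v^a_{s,n}(x) = \bigl(1 - \P_s(a\to\partial V_n)/\rho_s(a,x)\cdot(\text{something})\bigr)$-type relations, but cleaner is to use the flow/current description: the current into $a$ is $\P_s(a\to\partial V_n)$, and the voltages $v_{s,n}$ are determined by that plus Kirchhoff's and Ohm's laws on the finite sub-network. The natural route, following \cite{Muranova3}, is that $\bigl(v^a_{s,n}(x)\bigr)_n$ converges for each $x$ because the networks $(V_n,\rho_s)$ exhaust $(V,\rho_s)$ and the effective admittances converge (Proposition~\ref{pro:converge}); one then identifies the limit with the ``infinite-network solution'' in the sense of exhaustion. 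Concretely, I would show $v^a_{s,n+1}$ and $v^a_{s,n}$ are linked by a monotone-type or contraction-type relation in the $s>0$ case and then transport pointwise convergence to all $s\in\mathbb{H}_r$ using the comparison estimates of Section~\ref{section:estimates}, which dominate $|v^a_{s,n}(x)|$ uniformly in $n$ by the corresponding quantity for a fixed reversible chain $\PQ_t$ or $\wt{\!\PQ}_s$ (via Proposition~\ref{pro:compare} and the representation \eqref{eq:series}, valid on the region where the relevant $r_{s,t}$ or $r_s$ condition of \eqref{eq:zs} holds).

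To upgrade pointwise convergence of the holomorphic functions $s\mapsto v^a_{s,n}(x)$ to a \emph{locally uniform} limit — which is what is needed for holomorphy of the limit — I would invoke a normal-families / Vitali–Porter argument: the sequence is locally uniformly bounded on $\mathbb{H}_r$ (again from the Section~\ref{section:estimates} comparisons, since $|v^a_{s,n}(x)|$ is bounded by a Green-function expression for a fixed reversible chain that does not depend on $n$, at least on compact subsets of $\mathbb{H}_r$ after choosing $t$ appropriately, e.g. $t=|s|$), hence normal; and it converges pointwise on $\mathbb{H}_r$; therefore by Vitali's theorem it converges locally uniformly to a holomorphic function. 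That limit is by definition $F^{P_s}_{V^{\circ}}(x,a)$, and holomorphy in $s$ follows.

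The main obstacle I anticipate is the uniform-in-$n$ bound on $|v^a_{s,n}(x)|$ over compact subsets of $\mathbb{H}_r$: the comparison inequalities of Section~\ref{section:estimates} control $|p_s|$ by the transition probabilities of a reversible chain, but the bound \eqref{eq:zs} needed to sum the series \eqref{eq:series} requires $r_{s,t}<1/\lambda(\PQ_{t|V^a})$, and in the genuinely infinite recurrent-within-$V^{\circ}$ situation (e.g. $\partial V=\emptyset$, transient network) one has $\lambda(\PQ_t)=1$ so $r_{s,t}<1$ is required, which fails for $s$ far from the positive axis. Circumventing this is the crux: one must instead bound $v^a_{s,n}$ \emph{not} via the free Green series but via the finite Green kernels $G_{V_n^{\circ}}^{P_s}$ directly, using that $\partial V_n$ always contains the sphere $S_n$, so $V_n^{\circ}$ is genuinely ``killed at the boundary'' and the relevant spectral radius $\lambda(\PQ_{t|V_n^{\circ}})$ is $<1$ — but it degenerates to $1$ as $n\to\infty$, so the domination must be arranged to survive that limit. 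I expect the clean fix is to bound $|v^a_{s,n}(x)| = |F^{P_s}_{V_n^{\circ}}(x,a)|$ by $F^{\wt{\!\PQ}_s}_{V_n^{\circ}}(x,a)\cdot C(s)$ or the like — a \emph{probability}, hence automatically $\le 1$ times a constant depending only on $s$ through $|s|/\Re s$ — rather than by a Green function; making that last inequality precise (it is a path-by-path domination of the killed-before-hitting-$a$ probabilities) is the step that needs genuine care.
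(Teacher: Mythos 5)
Your overall architecture is the right one and matches the paper's: each $v_{s,n}^a(x)=F^{P_s}_{V_n^\circ}(x,a)$ is a rational, hence holomorphic, function of $s$ on $\mathbb{H}_r$ (Cramer's rule plus the invertibility of $I-P_{s|V_n^\circ}$ from \cite{Muranova3}, \cite{Draper}); one then needs local uniform boundedness in $s$, uniform in $n$, to get a normal family; convergence is known for real $s>0$ by monotone convergence in the probabilistic interpretation; and Montel/Vitali--Porter together with the identity theorem force all subsequential limits to agree, giving locally uniform convergence on all of $\mathbb{H}_r$. (One small point of phrasing: you cannot assume pointwise convergence on all of $\mathbb{H}_r$ before invoking Vitali -- that is what is being proved -- but convergence on the ray $(0,\infty)$, which has accumulation points in $\mathbb{H}_r$, is exactly enough, and this is how the paper argues.)

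The genuine gap is the uniform bound, and you have correctly located it but not closed it. Your proposed fix -- dominating $|F^{P_s}_{V_n^\circ}(x,a)|$ by $C(s)$ times the probability $F^{\wt{\!\PQ}_s}_{V_n^\circ}(x,a)\le 1$ via path-by-path comparison -- does not work: the elementwise inequality $|p_s(x,y)|\le r_s\,\wt\pq_s(x,y)$ with $r_s=|s|/\Re s\ge 1$ contributes a factor $r_s^{k}$ to every path of length $k$, so what you actually obtain is
\begin{equation*}
\bigl|F^{P_s}_{V_n^\circ}(x,a)\bigr|\;\le\;F^{\,\wt{\!\PQ}_s}_{V_n^\circ}\bigl(x,a\,\big|\,r_s\bigr),
\end{equation*}
the generating function evaluated at $r_s>1$, not a constant multiple of its value at $1$. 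Since the exponent grows with the path length, no single constant $C(s)$ can be extracted, and for $s$ away from the positive axis this series can diverge (this is precisely the obstruction you noted with $\lambda=1$). The paper circumvents path counting entirely: it bounds the \emph{Dirichlet energy} of $v_n$, namely $\tfrac12\sum_{x,y}|v_n(x)-v_n(y)|^2\,\Re\rho_s(x,y)\le|\P_s(a\to\partial V_n)|$, which is uniformly bounded in $n$ by \cite[Cor.~1]{Muranova3}, and then converts this into a pointwise bound on $|v_n(x)-1|$ by a Cauchy--Schwarz estimate along a fixed geodesic from $a$ to $x$, using $\Re\rho_s(x,y)\gtrsim_s \rho_1(x,y)$ edgewise. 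This yields $|v_n(x)|\le 1+\sqrt{C(x)\,\Re\rho_1(a)}\,|s|(1+|s|^2)/(\Re s)^2$, uniform in $n$ and locally bounded in $s$, which is the estimate your argument is missing.
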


\begin{proof}
(Note that when $x = a$, the sequence is constant $=1$.)
We set $\P_n = \P_s(a \rightarrow \partial V_n)$.
Then \cite[Cor. 1]{Muranova3} 
shows that for any $n$, 
$$
|\P_n| \le \frac{|s|^2(1+|s|^2)}{(\Re s)^3} \Re \rho_{1}(a).  
$$
We now use the last identity of Definition \ref{def:admit}, also proved in \cite{Muranova3}.
It yields
$$
\frac12 \sum_{x,y \in V_n} |v_n(x)-v_n(y)|^2 \, \Re \rho_s(x,y) \le |\P_n|.
$$
Note that each edge appears twice in that sum. For each $x \in V\,$, we choose a shortest
path $[a=x_0\,,x_1\,, \dots, x_k=x]$ from $a$ to $x$ in our graph. If $n$ is sufficently large 
then it is contained in $(V_n\,,E_n)$. Recall that $v_n(a)=1$. 
We now use the Cauchy-Schwarz inequality:
$$
\begin{aligned}
&|v_n(x)-1|^2 \le \\ &\quad \le \left( \sum_{j=1}^k |v_n(x_j) - v_n(x_{j-1})| 
\sqrt{\Re \rho_s(x_{j-1},x_j)} \cdot \frac{1}{\sqrt{\Re \rho_s(x_{j-1},x_j)}}\right)^2\\
&\quad\le \left(\sum_{j=1}^k |v_n(x_j) - v_n(x_{j-1})|^2 \,\Re \rho_s(x_{j-1},x_j)\right)
    \cdot \left(\sum_{j=1}^k \frac{1}{\Re \rho_s(x_{j-1},x_j)}\right)\\
&\quad\le |\P_n| \cdot \left(\sum_{j=1}^k \frac{1}{\Re \rho_s(x_{j-1},x_j)}\right).
\end{aligned}
$$
Combinig \eqref{eq:estimateAdm2}, \eqref{eq:trivial} and Lemma \ref{lem:real}, 
we get that for any edge $[x,y]$,
$$
\Re \rho_s(x,y) \ge \frac{\Re s}{\max\{1,|s|^2\}}\, \rho_1(x,y).
$$
Therefore 
$$
\sum_{j=1}^k \frac{1}{\Re \rho_s(x_{k-1},x_k)} \le \frac{1 +|s|^2}{\Re s}\, C(x)\,,
\quad \text{where}\quad
C(x) = \sum_{j=1}^k \frac{1}{\rho_1(x_{k-1},x_k)}\,,
$$
which of course depends on the chosen shortest path from $a$ to $x$.
We conclude that 
$$
|v_n(x)| \le 1 + \sqrt{C(x) \,\Re \rho_1(a)}\, \frac{|s|(1+|s|^2)}{(\Re s)^{2}}.
$$
We can now proceed as in the proof of \cite[Thm. 6a]{Muranova3}.
For any fixed $x$ and $a$ in $V^{\circ}$, the sequence of holomorphic (rational)
functions 
$s \mapsto v_{n,s}(x) =  F^{P_s}_{V_n^{\circ}}(x,a)$
is bounded in any domain $\{ s \in \C : \Re s > \varepsilon, |s| < c \}$ 
where $0 < \varepsilon < c$. 
By Montel's theorem \cite[p. 153]{Conway}, this sequence of functions is precompact
in  $\mathbb{H}_r$ with respect to uniform convergence on compact sets. The limit of any convergent subsequence must me holomorphic in $\mathbb{H}_r\,$.
Now, if $s > 0$ is real, then 
$$
\lim_{n \to \infty} F^{\PQ_s}_{V_n^{\circ}}(x,a) = F^{\PQ_s}_{V^{\circ}}(x,a)
$$
by monotone convergence. But a holomorphic function on $\mathbb{H}_r$ is determined
by its values on the positive real half-axis. Therefore we have convergence on 
all of $\mathbb{H}_r\,$.
\end{proof}

\begin{corollary}\label{cor:rec}
 In the recurrent case (with $\partial V = \emptyset$),
 $$
 F^{P_s}(x,a): = \lim_{n \to \infty} F^{P_s}_{V_n^{\circ}}(x,a) = 1
 $$
 for all $x, a \in V$ and all $s \in \mathbb{H}_r\,$.
\end{corollary}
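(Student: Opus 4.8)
The plan is to combine Theorem~\ref{thm:converge} with the identity theorem for holomorphic functions, thereby reducing the statement to the classical real case.

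First I would invoke Theorem~\ref{thm:converge}: for every $x,a \in V$ the limit $F^{P_s}(x,a) = \lim_{n\to\infty} F^{P_s}_{V_n^{\circ}}(x,a)$ exists for all $s \in \mathbb{H}_r$ and defines a holomorphic function of $s$ on the right half plane. Since a holomorphic function on the connected domain $\mathbb{H}_r$ is determined by its values on the positive real half-axis $(0,\infty)$, which has accumulation points in $\mathbb{H}_r$, it suffices to prove $F^{P_s}(x,a) = 1$ for all $x,a \in V$ when $s > 0$ is real.

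So fix real $s > 0$; then $\PQ_s = P_s$ is the transition matrix of an irreducible, reversible nearest-neighbour Markov chain on $V$. By assumption the network is recurrent (with $\partial V = \emptyset$), which, by Proposition~\ref{pro:transrec} and Theorem~\ref{thm:transience}(a), is equivalent to recurrence of this Markov chain, i.e.\ $G^{\PQ_s}(a,a) = \infty$. For an irreducible recurrent Markov chain it is classical that the probability of ever reaching a given vertex $a$, starting from any vertex $x$, equals $1$ (see e.g.\ \cite{Woess09} or \cite{LP}). Exactly as in the transient case discussed before \eqref{eq:diag}, and by monotone convergence, $F^{\PQ_s}_{V_n^{\circ}}(x,a) = v_{s,n}^a(x)$ increases to this hitting probability; hence $F^{P_s}(x,a) = 1$ for real $s > 0$, and the holomorphy reduction above yields $F^{P_s}(x,a) \equiv 1$ on $\mathbb{H}_r$.

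I do not expect a genuine obstacle here: all the analytic work has already been done in Theorem~\ref{thm:converge}, and the real case is a textbook fact about recurrent chains. The only point one must be careful about is the identification of recurrence of the complex network in the sense of Definition~\ref{def:transience} with recurrence of the Markov chain $\PQ_s$ for $s>0$; this is supplied by Proposition~\ref{pro:transrec} together with Theorem~\ref{thm:transience}(a), after which the identity theorem finishes the argument.
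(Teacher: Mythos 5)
Your proof is correct and follows essentially the same route as the paper: the paper also deduces the corollary from the holomorphy of the limit established in Theorem~\ref{thm:converge} (via Montel's theorem) together with the fact that $F^{P_s}(x,a)=1$ for real $s>0$ in the recurrent stochastic case, so that the identity theorem forces the constant value $1$ on all of $\mathbb{H}_r$. Your additional care in matching Definition~\ref{def:transience} with classical recurrence via Proposition~\ref{pro:transrec} and Theorem~\ref{thm:transience}(a) is a welcome explicit step that the paper leaves implicit.
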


This holds once more by Montel's theorem, since $F^{P_s}(x,a)=1$ for all $s > 0$
(stochastic case).

We now can  define the Green kernel of the transient infinite network by
\begin{equation}\label{eq:allgreen}
G^{P_s}(x,a) = F^{P_s}(x,a)\,G^{P_s}(a,a)\,,\quad s \in \mathbb{H}_r\,,
\end{equation}
with $G^{P_s}(a,a)$ given by \eqref{eq:diag}. (Recall that $F^{P_s}(x,a)=1$.)
Then, in matrix notation,  
$$
(I_V - P_s) G^{P_s} = I_V\,,
$$
where $I_V$ is the identity matrix over $V$. 
In precisely the same way, we also get the Green kernel $G^{P_s}_{V^{\circ}}(x,a)$
when $\partial V \ne \emptyset$, and it satisfies 
$$
(I_{V^{\circ}} - P_{s|V^{\circ}}) G^{P_s}_{V^{\circ}} = I_{V^{\circ}}\,.
$$
\begin{question}\label{que:converge}
Is it true that in the transient case, the analogue of the last identity of
Definition \ref{def:admit} holds for the infinite network\,? That is,
setting $v(x) = F^{P_s}(x,a) = \lim_n v_n(x)$, then is it true that  
$$
\P_s(a\rightarrow \infty)  = \frac12 \sum_{x,y \in V} |v(x)-v(y)|^2 \, \rho_s(x,y)\,,\quad 
s \in \mathbb{H}_r\;?
$$
For $s > 0$, this is well-known to hold, see e.g. \cite[Exercise 2.13]{Woess00}.
\end{question}

Let us take up the notation of \eqref{eq:fn}, for arbitrary $P \in \Ppi\,$:
\begin{equation}\label{eq:fk}
f_{V^{\circ}}^{(k)}(x,a) = \sum_{w \in V^a}p_{V^a}^{(k-1)}(x,w)p(w,a) \quad 
\text{for }\;x \in V^a\,, \; k \ge 1.
\end{equation}
Once more, $V^{\circ} = V$ when $\partial V = \emptyset$. By local finiteness, the sum is finite.
We use the analogous notation with respect to $V_n^{\circ}$ and $V_n^a$, where  $V_n\,$, $n \in \N$,
are the vertex sets of our increasing family of finite subnetworks. We also consider the generating
power series 
\begin{equation}\label{eq:FFn}
\begin{aligned}
F_{V^{\circ}}^{P}(x,a|z) &= \sum_{k=1}^{\infty} f_{V^{\circ}}^{(k)}(x,a) \,z^k
\AND\\
F_{V_n^{\circ}}^{P}(x,a|z) &= \sum_{k=1}^{\infty} f_{V_n^{\circ}}^{(k)}(x,a)\,z^k\,,\quad z \in \mathbb{C}.  
\end{aligned}
\end{equation}
For $\PQ \in \Ppi_+\,$, we extend the definition of \eqref{eq:specrad} by 
\begin{equation}\label{eq:radspec}
\lambda(\PQ_{V^a}) = 
                        \sup_{x,y \in V^a} \limsup_{n \to \infty} p_{V^a}^{(n)}(x,y)^{1/n}\,.
\end{equation}
Observe that deletion of the non-empty set $\partial_a V$ leaves ad most countably 
many connected components $C_i$ of our graph. Then each $\PQ_{C_i}$ is an irreducible, 
substochastic matrix, so that by old
and well-known results on infinite, non-negative matrices (see {\sc Seneta}~\cite{Seneta}),
$$
\lambda(\PQ_{C_i}) = \limsup_{n \to \infty} p_{C_i}^{(n)}(x,y)^{1/n} 
$$
is independent of $x,y \in C_i\,$, while $p_{V^a}^{(n)}(x,y) = 0$ when $x$ and $y$ 
do not belong to the same component. This means that $\lambda(\PQ_{V^a}) = \sup_i \lambda(\PQ_{C_i})$.
Recalling that $V^a = V^{\circ} \setminus \{ a \}$,
we thus get the following also in the infinite case.

\begin{lemma}\label{lem:FG-converge} If $\PQ \in \Ppi_+$ then the power series 
defining $G_{V^a}^{\PQ}(x,a|z)$ and thus also $F_{V^{\circ}}^{\PQ}(x,a|z)$,
as well as $F_{V_n^{\circ}}^{\PQ}(x,a|z)$,
converge absolutely whenever $|z| < 1/\lambda(\PQ_{\,V^{a}})$.
\end{lemma}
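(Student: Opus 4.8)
The plan is to reduce everything to the convergence of the Green-function power series $G_{V^a}^{\PQ}(x,a|z)$ and then read off the two $F$-series as simple algebraic consequences. First I would recall from the discussion immediately preceding the statement that, since $\partial_a V = \partial V \cup \{a\}$ is non-empty, deleting it leaves at most countably many connected components $C_i$, and on each such component $\PQ_{C_i}$ is an irreducible, genuinely substochastic non-negative matrix. By the Vere-Jones/Seneta theory of $R$-recurrence for countable non-negative matrices (\cite{Seneta}), the quantity $\lambda(\PQ_{C_i}) = \limsup_n p_{C_i}^{(n)}(x,y)^{1/n}$ is independent of $x,y \in C_i$, and the power series $\sum_n p_{C_i}^{(n)}(x,y)\,z^n$ has radius of convergence exactly $1/\lambda(\PQ_{C_i})$; hence it converges absolutely for $|z| < 1/\lambda(\PQ_{C_i})$. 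Since $p_{V^a}^{(n)}(x,y) = p_{C_i}^{(n)}(x,y)$ when $x,y$ lie in the same $C_i$ and vanishes otherwise, and since $\lambda(\PQ_{V^a}) = \sup_i \lambda(\PQ_{C_i})$ by \eqref{eq:radspec}, the series $G_{V^a}^{\PQ}(x,a|z)$ converges absolutely for every $|z| < 1/\lambda(\PQ_{V^a})$ and all $x \in V^a$ (the case $x = a$ being trivial, and $a \notin V^a$ so only $x \in V^a$ is at issue for the $G_{V^a}$ kernel).

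Next I would handle $F_{V^{\circ}}^{\PQ}(x,a|z)$. From \eqref{eq:fk}, for $x \in V^a$ and $k \ge 1$ we have $f_{V^{\circ}}^{(k)}(x,a) = \sum_{w \in V^a} p_{V^a}^{(k-1)}(x,w)\,p(w,a)$, so that $F_{V^{\circ}}^{\PQ}(x,a|z) = z\sum_{w \in V^a} G_{V^a}^{\PQ}(x,w|z)\,\pq(w,a)$, where the $w$-sum is over the neighbours of $a$ lying in $V^a$ and is therefore finite by local finiteness. For $x = a$ one has $f_{V^{\circ}}^{(k)}(a,a) = \delta_0(k)$, i.e.\ the series is identically $1$. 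Absolute convergence of each $G_{V^a}^{\PQ}(x,w|z)$ for $|z| < 1/\lambda(\PQ_{V^a})$, together with finiteness of the neighbour sum, gives absolute convergence of $F_{V^{\circ}}^{\PQ}(x,a|z)$ on the same disc. For $F_{V_n^{\circ}}^{\PQ}(x,a|z)$ I would observe that the finite network $(V_n,\PQ)$ has its own reduced interior $V_n^a$ with $V_n^a \subseteq V^a$, and by monotonicity of non-negative entries under passing to a sub-matrix, $p_{V_n^a}^{(m)}(x,y) \le p_{V^a}^{(m)}(x,y)$ for all $m$; hence $\lambda(\PQ_{V_n^a}) \le \lambda(\PQ_{V^a})$, so $F_{V_n^{\circ}}^{\PQ}(x,a|z)$ (a polynomial-coefficient series dominated termwise by the corresponding $V^a$-series) also converges absolutely on $|z| < 1/\lambda(\PQ_{V^a})$.

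The main obstacle — really the only non-routine point — is the assertion that the Perron–Frobenius-type radius $1/\lambda(\PQ_{C_i})$ is an honest radius of convergence and that $\lambda(\PQ_{C_i})$ does not depend on the chosen pair of states; this is exactly what the classical theory of $R$-recurrent non-negative matrices (Vere-Jones, as exposed in \cite{Seneta}) provides, and I would simply cite it rather than reprove it. Everything else — the algebraic identity relating $F$ and $G$, the reduction to components, the termwise domination for the finite subnetworks — is bookkeeping that follows from local finiteness and non-negativity. One small care point worth flagging: the statement writes $G_{V^a}^{\PQ}(x,a|z)$, but since $a \notin V^a$ this kernel is only ever evaluated through the neighbour-sum appearing in $F_{V^{\circ}}^{\PQ}$, so no separate argument is needed for a ``diagonal at $a$'' term; and when $\partial V = \emptyset$ one reads $V^{\circ} = V$, $V^a = V \setminus \{a\}$, and the same argument applies verbatim provided $\lambda(\PQ_{V^a}) < \infty$, which holds because each $\PQ_{C_i}$ is substochastic so $\lambda(\PQ_{C_i}) \le 1$.
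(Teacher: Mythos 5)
Your argument is correct and follows essentially the same route the paper takes: the lemma is presented there as an immediate consequence of the preceding paragraph (decomposition of $V^a$ into components $C_i$, the Vere--Jones/Seneta fact that $\lambda(\PQ_{C_i})$ is a common convergence radius for all entries, and $\lambda(\PQ_{V^a})=\sup_i\lambda(\PQ_{C_i})$), with the passage from $G$ to $F$ via \eqref{eq:fk} and the termwise domination of the $V_n$-series left implicit. Your write-up just makes those implicit steps explicit; no discrepancy to report.
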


We now have the following comparison result for convergence of the respective power series.

\begin{theorem}\label{thm:conv}
Let $s \in \mathbb{H}_r$, $t > 0$ and $z \in \C$. If 
$$
|z| < \dfrac{1}{r_{s,t}\,\lambda(\PQ_{t|V^a})}
\quad \text{or} \quad |z| < \dfrac{1}{r_s\,\lambda(\,\wt{\!\!\PQ}_{s|V^a})}
\quad \text{or} \quad |z| < \dfrac{1}{r_s\,\lambda(\,\wc{\!\!\PQ}_{s|V^a})}
$$
then the power series of \eqref{eq:FFn} converge absolutely for $P = P_s\,$. Furthermore
$$
\lim_{n \to \infty} F_{V_n^{\circ}}^{P_s}(x,a|z) = F_{V^{\circ}}^{P_s}(x,a|z) \quad \text{for all }\;
x \in V^a\,.
$$
This also holds when $\partial V = \emptyset,$ i.e., $V^{\circ} = V$ and $V^a = V \setminus\{a\}$.
\end{theorem}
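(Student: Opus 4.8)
The plan is to reduce everything, term by term, to the stochastic chains of $\Ppi_+$, for which the relevant power series are already under control by Lemma~\ref{lem:FG-converge}, and then to pass to the limit $n\to\infty$ by dominated convergence. I will describe the first of the three cases (comparison with $\PQ_t$); the other two are verbatim, with Proposition~\ref{pro:le} in place of Corollary~\ref{cor:s-t} and $r_s$ in place of $r_{s,t}$.

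First I would record the pointwise domination. Corollary~\ref{cor:s-t} gives $|p_s(x,y)|\le r_{s,t}\,\pq_t(x,y)$ for all $x,y\in V$; since a non-negative matrix majorant is preserved under matrix products and under restriction to a subset, iterating yields $|p_{s|V^a}^{(k-1)}(x,w)|\le r_{s,t}^{\,k-1}\,\pq_{t|V^a}^{(k-1)}(x,w)$, and substituting into \eqref{eq:fk} gives $|f_{V^{\circ}}^{(k)}(x,a)|\le r_{s,t}^{\,k}\,\fq_{V^{\circ}}^{(k)}(x,a)$ for $k\ge1$, together with the same bound when $V^{\circ},V^a$ are replaced by $V_n^{\circ},V_n^a$. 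Here I would note the elementary point that for an interior vertex $x\in V_n^{\circ}\subseteq V_{n-1}$ every neighbour of $x$ lies in $V_n$, so $\rho_{s,n}(x)=\rho_s(x)$ and the transition operator of the finite subnetwork agrees with the ambient $p_s$ at $x$; hence the comparison constant $r_{s,t}$ can be taken the same for every $n$. Since moreover $V_n^a\subseteq V^a$ and restriction only decreases non-negative entries, $\fq_{V_n^{\circ}}^{(k)}(x,a)\le\fq_{V^{\circ}}^{(k)}(x,a)$, so that, for every $n$,
\[
\bigl|f_{V_n^{\circ}}^{(k)}(x,a)\,z^k\bigr|\;\le\;\fq_{V^{\circ}}^{(k)}(x,a)\,\bigl(r_{s,t}|z|\bigr)^k\;=:\;M_k,
\]
with $M_k$ independent of $n$.

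Next, by Lemma~\ref{lem:FG-converge} the series $\sum_k\fq_{V^{\circ}}^{(k)}(x,a)\,w^k=F_{V^{\circ}}^{\PQ_t}(x,a|w)$ converges absolutely for $|w|<1/\lambda(\PQ_{t|V^a})$, so taking $w=r_{s,t}|z|$ we get $\sum_k M_k<\infty$ whenever $|z|<1/\bigl(r_{s,t}\,\lambda(\PQ_{t|V^a})\bigr)$, which is exactly the first hypothesis of the theorem. This already gives absolute convergence of all the power series in \eqref{eq:FFn} for $P=P_s$, uniformly in $n$. For the limit, I would first check that, for each fixed $k$ and each $x$, one has $f_{V_n^{\circ}}^{(k)}(x,a)=f_{V^{\circ}}^{(k)}(x,a)$ once $n$ is large enough: by local finiteness there are only finitely many walks of length $k-1$ issuing from $x$, and any such walk that stays in $V^a$ visits only vertices in a fixed finite ball, hence lies in $V_{n-1}\cap V^{\circ}\subseteq V_n^{\circ}$ and avoids $a$ for $n$ large, so it contributes to $p_{s|V_n^a}^{(k-1)}(x,w)$ precisely when it contributes to $p_{s|V^a}^{(k-1)}(x,w)$; likewise $p_{s,n}(w,a)=p_s(w,a)$ for $w\sim a$ interior. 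In particular $f_{V_n^{\circ}}^{(k)}(x,a)\to f_{V^{\circ}}^{(k)}(x,a)$, so with the $n$-independent summable majorant $(M_k)$ the dominated convergence theorem (with counting measure in $k$) yields $\lim_n F_{V_n^{\circ}}^{P_s}(x,a|z)=F_{V^{\circ}}^{P_s}(x,a|z)$. The case $\partial V=\emptyset$ is contained in this verbatim, with $V^{\circ}=V$, $V^a=V\setminus\{a\}$, $\partial V_n=S_n$ and $V_n^{\circ}=V_{n-1}$.

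I do not expect a genuine obstacle: the estimates needed are all available (Proposition~\ref{pro:le}, Corollary~\ref{cor:s-t}, Lemma~\ref{lem:FG-converge}), and the scheme is a routine Weierstrass-$M$-test / dominated-convergence argument. The one point that needs care is the bookkeeping in the first step — checking that the finite subnetwork's normalization $\rho_{s,n}(x)$ coincides with $\rho_s(x)$ at interior vertices (so that a single stochastic chain on all of $V$, with constants $r_{s,t}$ or $r_s$ independent of $n$, dominates every $F_{V_n^{\circ}}^{P_s}$), and that passing to the exhaustion $V_n^a\uparrow V^a$ is monotone for non-negative matrices; together these are what make the majorant $(M_k)$ uniform in $n$.
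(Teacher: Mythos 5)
Your argument is correct and takes essentially the same route as the paper's proof: both rest on the pointwise domination $|p_s(x,y)|\le r\,\pq(x,y)$ from Proposition~\ref{pro:le} / Corollary~\ref{cor:s-t}, propagated through products of the restricted matrices so that every series in \eqref{eq:FFn} is majorized term by term by the stochastic series $F_{V^{\circ}}^{\PQ}\bigl(x,a\big|\,r|z|\bigr)$, which converges by Lemma~\ref{lem:FG-converge}. The only cosmetic difference is in the limit over $n$: the paper encodes the coefficients as weights of walk sets and bounds $\bigl|F_{V^{\circ}}^{P_s}-F_{V_n^{\circ}}^{P_s}\bigr|$ by the tail $F_{V^{\circ}}^{\PQ}\bigl(x,a\big|\,r|z|\bigr)-F_{V_n^{\circ}}^{\PQ}\bigl(x,a\big|\,r|z|\bigr)\to 0$ via monotone convergence, whereas you stabilize each coefficient for large $n$ and apply dominated convergence over $k$ with the same $n$-independent majorant.
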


\begin{proof}  
A \emph{walk} in $(V,E)$ is a sequence $\ww = (x_0\,,x_1\,,\dots, x_k)$ of vertices such that
$[x_{i-1}\,,x_i] \in E$ for all $i$.  Its length is $k$, and for $z \in\C$, its \emph{$z$-weight}
with respect to $P \in \Ppi$ is
$$
W^P(\ww|z) = \prod_{i=1}^k \bigl(p(x_{i-1},x_i)\,z\bigr).  
$$
We also admit $k=0$, in which case the walk consists of a single vertex, and its
weight is defined as $1$.
If $\WW$ is a set of walks, then 
$$
W^P(\WW|z) = \sum_{\ww \in \WW} W^P(\ww|z).
$$
When $\WW$ is infinite, we require absolute convergence.
For any subset $U$ of $V$, and $x,y \in U$, let $\WW_U(x,y)$ be the set 
of all walks within $U$ which start at $x$ and end at $y$, and 
$\WWs_U(x,y)$ the set of those walks which meet $y$ only at their
endpoint. Finally,  the superscript $^{(k)}$ refers to the respective walks
of length $k$. Note that $\WW_U^{(k)}(x,y)$ is finite.
Then, referring to \eqref{eq:fk}, for $x, y \in V^a$ we have
$$
p_{V^a}^{(k)}(x,y) = W^P\bigl(\WW_{V^a}^{(k)}(x,y)|z\bigr)
\AND
f_{V^a}^{(k)}(x,a) = W^P\bigl(\WWs_{V^{\circ}}^{(k)}(x,a)|z\bigr).
$$
The analogous identities hold when whe replace $V$ by $V_n\,$. If $|z|$ is
sufficiently small to yield absolute convergence, we get
$$
G_{V^a}^P(x,y) = W^P\bigl(\WW_{V^a}(x,y)|z\bigr)
\AND 
F_{V^{\circ}}^P(x,a)= W^P\bigl(\WWs_{V^{\circ}}(x,a)|z\bigr),
$$ 
and with the same $z$, we may again replace $V$ by $V_n\,$.

Now suppose that $\PQ \in \Ppi_+$. Then ``sufficiently small'' just means
that $|z| < 1 / \lambda(\PQ_{V^a})$. 
We can apply this to $P_s$ with $s \in \mathbb{H}_r\,$, and to $\PQ = \wt{\!\PQ}_s$ or
$\PQ = \wc{\!\PQ}_s$
or $\PQ = \PQ_t$ with $t > 0$. Then 
$|p_s(x,y)| \le r\, \pq(x,y)$ with $r= r_{s,t}$ or $r = r_s\,$, respectively.
Then
$$
\bigl|W^{P_s}\bigl(\WW_{V^a}^{(k)}(x,y)|z\bigr)\bigr| \le  W^{\PQ}\bigl(\WW_{V^a}^{(k)}(x,y)\big|\, r|z| \bigr)
$$
for all $x,y \in V^a \supset V_n^a$ and all $k \ge 0$. When, as assumed, $r|z| < 1/\lambda(\,\PQ_{V^a})$,
we get that both  power series 
$F_{V^{\circ}}^{P_s}(x,a|z)$ and $F_{V_n^{\circ}}^{P_s}(x,a|z)$ are dominated in 
element-wise absolute value by 
$$
F_{V^{\circ}}^{\PQ}\bigl(x,a\big|\,r|z|\bigr) < \infty\,.
$$
The use of weights of walks serves in particular to verify the second statement of the theorem: 
for $r|z| < 1/\lambda(\,\PQ_{V^a})$,
absolute convergence allows us two estimate
$$
\begin{aligned}
\Bigl| F_{V^{\circ}}^{P_s}(x,a|z) - F_{V_n^{\circ}}^{P_s}(x,a|z) \Bigr| &=
\Bigl| W^{P_s}\Bigl(\WWs_{V^{\circ}}(x,a) \setminus \WWs_{V_n^{\circ}}(x,a)\,\big|\, z\Bigr) \Bigr|\\
&\le W^{\PQ}\Bigl(\WWs_{V^{\circ}}(x,a) \setminus \WWs_{V_n^{\circ}}(x,a)\,\big|\,r|z|\Bigr)\\
&= F_{V^{\circ}}^{\PQ}\bigl(x,a\big|\,r|z|\bigr) - F_{V_n^{\circ}}^{\PQ}\bigl(x,a\big|\,r|z|\bigr).
\end{aligned}
$$
By monotone convergence, the last difference tends to $0$. 
\end{proof}

We would like to apply the last theorem in particular to the unsrestricted transient case 
($\partial V = \emptyset$) with $z = 1$. When $s$ is non-real,
this requires that the stochastic comparison matrix $\PQ \in \Ppi_+\,$ satisfies $\lambda(\PQ\,) < 1$.
This is independent of the specific choice of $\PQ\,$ by Proposition \ref{pro:transrec}, and then 
Theorem \ref{thm:conv} applies when $|\Im s|$ is sufficiently small. Compare this with 
Example \ref{ex:ample}. For general $s \in \mathbb{H}_r\,$, let
$$
 \lambda(P_s)_{xy} = \limsup_{n \to \infty}\, |p_s^{(n)}(x,y)|^{1/n}\,. 
$$
Then $1/\lambda(P_s)_{xy}$ is the radius of convergence of the power series $G^{P_s}(x,y|z)$,
defined as in \eqref{eq:Gxyz}. However, contrary to the stochastic case, we do not see a general
argument that this should be independent of $x$ and $y$. Let us call
\begin{equation}\label{eq:lam}
 \lambda(P_s) = \sup \{ \lambda(P_s)_{xy} : x,y \in V\}
\end{equation}
the \emph{spectral radius} of $P_s\,$. In the stochastic case,  this is 
indeed the spectral radius (norm) as a self-adjoint operator, see \eqref{eq:m}.
For general $s \in \mathbb{H}_r\,$, we are not aware of an analogous interpretation.

Another question is the following. For stochastic $\PQ \in \Ppi_+\,$, the function 
$z \mapsto G^{\PQ}(x,y|1/z)/z$
is the $(x,y)$-matrix element of the resolvent operator $(z\cdot I-\PQ\,)^{-1}$, so that it extends
analytically to $\C \setminus \textsf{spec}(\PQ\,)$, where $\textsf{spec}(\PQ\,)$ is the spectrum of
$\PQ$ as an operator as described via \eqref{eq:m}. Since 
$\textsf{spec}(\PQ\,) \subset [-\lambda(\PQ\,)\,,\lambda(\PQ\,)]\,$ is real, we get that $G^{\PQ}(x,y|z)$
extends as a holomorphic function from the disk $\{ z \in \C : |z| < 1/\lambda(\PQ\,) \}$
to all $z \in \C \setminus \R$ with $|z| \ge 1/\lambda(\PQ\,)$. Is there a similar 
property for general $P_s\,$?

These observations and questions are also valid when $\partial V \ne \emptyset$. The same is true
for the next identities which we state only for empty boundary. Recall that we have $F^{P_s}(a,a)=1$
for every $a \in V$.

\begin{lemma}\label{lem:identities} For every $s \in \mathbb{H}_r\,$, the following holds.\\[4pt]
\emph{(a)} $(V,\rho_s)$ is recurrent if and only if $F^{P_s}(x,a) = 1$ for some $a \in V$ and 
all $x \sim a$. In this case, $F^{P_s}(x,y) = 1$ for all $x,y \in V$.
\\[3pt]
\emph{(b)} In the transient case, for every $a \in V$,
$$
G^{P_s}(a,a) = \frac{1}{1 - U^{P_s}(a,a)}\,,\quad \text{where}\quad 
U^{P_s}(a,a) = \sum_{x \sim a} p_s(a,x)\,F^{P_s}(x,a)\,. 
$$
\emph{(c)} For all $x,a \in V$ with $x \ne a$ (not necessarily neighbours)
$$
F^{P_s}(x,a) = \sum_{y} p_s(x,y)\, F^{P_s}(y,a).
$$
\emph{(d)} If $y$ is a cut vertex between $x$ and $a$ (i.e., every path from 
$x$ to $a$ passes through $y$) then
$$
F^{P_s}(x,a) = F^{P_s}(x,y)\,F^{P_s}(y,a).
$$
\end{lemma}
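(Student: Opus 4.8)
The plan is to derive all four statements by analytic continuation from the case of real $s>0$, where $P_s=\PQ_s$ is the transition matrix of a reversible Markov chain and (a)--(d) are classical. Three observations make this work. First, every quantity occurring is holomorphic in $s$ on $\mathbb{H}_r$: the coefficients $p_s(x,y)=\rho_s(x,y)/\rho_s(x)$ are rational in $s$ with a denominator that does not vanish on $\mathbb{H}_r$ (since $\Re\rho_s(x)>0$ and, as in the proof of Lemma~\ref{lem:estimates}, $L_{xy}s^2+R_{xy}s+D_{xy}=s/\rho_s(x,y)\ne 0$ there), and $s\mapsto F^{P_s}(x,a)$ is holomorphic by Theorem~\ref{thm:converge}. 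Second, by Theorem~\ref{thm:transience}(a) the network is transient for all $s\in\mathbb{H}_r$ or for none, so "the transient case" refers to a property of the network alone; in that case $\P_s(a\to\infty)$ is holomorphic (Proposition~\ref{pro:limit}) and, not being identically zero, is nowhere zero on $\mathbb{H}_r$ by Hurwitz' theorem as in the proof of Theorem~\ref{thm:transience}, whence by \eqref{eq:diag} the function $G^{P_s}(a,a)=\rho_s(a)/\P_s(a\to\infty)$ is holomorphic and nowhere zero on $\mathbb{H}_r$. Third, $\R_+$ has accumulation points in the connected open set $\mathbb{H}_r$, so the identity theorem forces any identity between functions holomorphic on $\mathbb{H}_r$ to hold on all of $\mathbb{H}_r$ as soon as it holds for $s>0$ --- exactly the mechanism of Theorem~\ref{thm:converge}.

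I would then handle (c) and (b) first. In (c) both sides of $F^{P_s}(x,a)=\sum_y p_s(x,y)\,F^{P_s}(y,a)$ are holomorphic in $s$ (the sum is finite by local finiteness), and for $s>0$ this is the usual first-step decomposition of the hitting probability $F^{\PQ_s}(\cdot,a)$; the identity theorem gives (c) on $\mathbb{H}_r$. For (b), work in the transient case. For $s>0$ the "number of returns" decomposition yields $G^{\PQ_s}(a,a)=1/\bigl(1-U^{\PQ_s}(a,a)\bigr)$ with $U^{\PQ_s}(a,a)=\sum_{x\sim a}\pq_s(a,x)\,F^{\PQ_s}(x,a)$, consistent with the definition \eqref{eq:diag} via the effective admittance. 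Since $s\mapsto U^{P_s}(a,a)$ is holomorphic (a finite sum of products of holomorphic functions), so is $s\mapsto G^{P_s}(a,a)\bigl(1-U^{P_s}(a,a)\bigr)$, and it equals $1$ for $s>0$, hence on all of $\mathbb{H}_r$; in particular $1-U^{P_s}(a,a)=1/G^{P_s}(a,a)\ne 0$, and (b) follows.

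Next, (a). If $(V,\rho_s)$ is recurrent, Corollary~\ref{cor:rec} gives $F^{P_s}(x,y)=1$ for all $x,y$, which is the displayed consequence and, with $y=a$ and $x\sim a$, the forward implication. Conversely, suppose $F^{P_s}(x,a)=1$ for some $a$ and all $x\sim a$. If the network were transient, then by (b), $U^{P_s}(a,a)=\sum_{x\sim a}p_s(a,x)=1$ (the rows of $P_s$ sum to $1$), forcing $1/\bigl(1-U^{P_s}(a,a)\bigr)=\infty$ while $G^{P_s}(a,a)=\rho_s(a)/\P_s(a\to\infty)$ is finite --- a contradiction; hence the network is recurrent. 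Since recurrence does not depend on $s$ (Theorem~\ref{thm:transience}(a)), this proves (a). Finally, in (d) both $s\mapsto F^{P_s}(x,a)$ and $s\mapsto F^{P_s}(x,y)\,F^{P_s}(y,a)$ are holomorphic on $\mathbb{H}_r$, and for $s>0$ the equality $F^{\PQ_s}(x,a)=F^{\PQ_s}(x,y)\,F^{\PQ_s}(y,a)$ is the classical decomposition of a first-passage generating function at a cut vertex (split each path from $x$ to $a$ at its first visit to $y$); the identity theorem extends it to $\mathbb{H}_r$. Alternatively one can read this identity off the walk calculus of Theorem~\ref{thm:conv} for small $|z|$ and then continue in $s$.

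The hard part will not be any single computation but keeping the holomorphy bookkeeping airtight: in particular, verifying that $G^{P_s}(a,a)$, defined through the effective admittance rather than a convergent power series, is genuinely holomorphic and non-vanishing on all of $\mathbb{H}_r$ in the transient case, and that transience/recurrence is an $s$-free property of the network so that all statements make sense uniformly. Once this is secured, each of (a)--(d) is one line of the identity theorem plus the corresponding classical fact for $s>0$, with the converse in (a) the only place that additionally needs (b).
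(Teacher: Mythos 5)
Your proposal is correct and follows essentially the same route as the paper: the authors' entire proof is the remark that the identities are classical for $s>0$ (citing \cite[\S 1.D]{Woess09}) and extend to $s\in\mathbb{H}_r$ by analytic continuation as in the proof of Theorem~\ref{thm:converge}. Your write-up simply makes the holomorphy bookkeeping (non-vanishing of $\P_s(a\to\infty)$ via Hurwitz, holomorphy of $F^{P_s}$ and $G^{P_s}(a,a)$, the identity theorem on $\mathbb{H}_r$) and the converse in (a) explicit, which is consistent with what the paper intends.
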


All these identities hold for $s > 0$, see e.g. \cite[\S 1.D]{Woess09}, and extend 
to complex $s \in \mathbb{H}_r$ by analytic extension, compare with the proof of
Theorem \ref{thm:converge}. They also hold for the generating functions
$G^{P_s}(a,a|z)$ and $F^{P_s}(x,y|z)$ with the adaptations
$$
\begin{aligned}
U^{P_s}(a,a|z) &= \sum_{x \sim a} p_s(a,x)z\,F^{P_s}(x,a|z) \AND\\ 
F^{P_s}(x,a|z) &= \sum_{y} p_s(x,y)z\, F^{P_s}(x,a|z)
\end{aligned}
$$
as long as $|z|< 1/\lambda(P_s)$, but it is not clear to us how to
bridge the gap between these values of $z$ and the value $1$ corresponding to
the statements of Lemma \ref{lem:identities}.

\section{Trees and free groups}\label{sec:trees}

In this section we concentrate on the infinite, transient case in absence of a 
finite set of grounded vertices. For $P_s \in \Ppi$, we shall write 
$G^s=G^{P_s}$ and $F^s=F^{P_s}$ for the associated kernels. The fact that we
have these kernels and that their matrix elements are holomorphic functions of
$s \in \mathbb{H}_r$ allows us to transport a variety of methods and results
from the stochastic case to this complex-weighted one. Here, we present some
example classes of this kind.

\medskip

\textbf{A. Trees and harmonic functions}

We assume that $V=T$ is (the vertex set of) an infinite, locally finite tree,
i.e., a connected graph wihout closed walks whose vertices are all distinct.
We assume that each vertex has at least two neighbours. We also assume that
our complex weights $\rho_s(x,y)$ are such that $P_s$ is transient for some
($\!\!\iff$ all) $s \in \mathbb{H}_r\,$. Taking up the definition of \S \ref{section:finiteGreen},
a function $h : T \to \C$ is called  \emph{harmonic} on $T$ if for all $x \in T$
$$
P_s h = h\,,\quad \text{where}\quad P_sh(x) = \sum_{y: y\sim x} p_s(x,y)h(y)\,.
$$
In this sub-section, we shall explain that every harmonic function has a 
Poisson-type boundary integral representation. 

We start by recalling the \emph{boundary at infinity} of the tree. First of all,  
for any pair of vertices $x,y$ there is a unique geodesic 
path $\pi(x,y)=[x=x_0\,,x_1\,,\dots, x_n=y]$ in $T$
from $x$ to $y$. A \emph{geodesic ray} is
a sequence $\pi = [x_0\,,x_1\,,x_2\,,\dots]$ of distinct vertices such that $x_k \sim x_{k-1}$
for all $k$. Two rays are called equivalent if (as sets) their symmetric difference is
finite, that is, they differ at most for finitely many initial vertices. An equivalence
class of rays is an \emph{end} of $T$. It represents a way (direction) of going to infinity
in $T$. 
The set of all ends $\partial^{\infty}T$ is the boundary at infinity of $T$.
For every $x \in T$ and $\xi \in \partial^{\infty}T$, there is a unique ray $\pi(x,\xi)$
with initial vertex $x$ which represents $\xi$.
We get the compact metric space $\widehat T = T \cup \partial^{\infty}T$ as follows.
We fix a ``root'' vertex $o$. The length $|x|$ of a vertex $x \in T$ is its graph distance
from $o$, that is, the number of edges of $\pi(o,x)$. For distinct $\xi, \eta \in \widehat{T}$, their 
\emph{confluent} $\xi \wedge \eta$ is the last common vertex on the geodesics
$\pi(o,\xi)$ and $\pi(o,\eta)$. Then
$$
\theta(\xi,\eta) = \begin{cases} 0\,&\text{if }\; \xi = \eta\,,\\
                                 2^{-|\xi \wedge \eta|}\,,&\text{if }\; \xi \ne \eta
                   \end{cases}
$$
defines an (ultra)metric on $\widehat{T}$, and $T$ becomes a discrete, dense subset of
the compact space $\widehat{T}$. A basis of the toppology on $\partial^{\infty} T$ 
is given by all \emph{boundary arcs}
$$
\partial^{\infty} T_x = \{ \xi \in \partial^{\infty} T: x \in \pi(o,\xi) \}\,,\quad x \in T\,.
$$
Each boundary arc is open-compact. A \emph{successor} of a vertex $x \in T$ is a neighbour
$y$ of $x$ such that $|y|=|x|+1$, and then we call $x=y^-$ the \emph{predecessor} of $y$.
We have 
$$
\partial^{\infty} T_x = \bigcup_{y^-=x} \partial^{\infty} T_y\,,
$$
a disjoint union. 

\begin{definition}\label{def:distribution} 
A \emph{distribution} on $\partial^{\infty} T$ is a finitely additive complex measure 
$\nu$ on $\mathcal{F} = \{ \partial^{\infty} T_x : x \in T \}$, that is,
$$
\nu(\partial^{\infty} T_x) = \sum_{y^-=x} \nu(\partial^{\infty} T_y)
$$
for all $x \in T$.
\end{definition}

\begin{remark}\label{rmk:cohen} In \cite{Cohen}, the following is proved. 
 A distribution $\nu$ on $\partial^{\infty} T$ extends to a complex Borel measure
 on the compact space $\partial^{\infty} T$ if an only if for any family 
 of mutually disjoint boundary arcs $\partial^{\infty} T_{x_n}$, $n \in \N$,
 one has
 $$
 \sum_n \bigl|\nu(\partial^{\infty} T_{x_n})\bigr| < \infty.
 $$
\end{remark}

If $\varphi$ is a locally constant function on $\partial^{\infty}T$ then it can be
written as a linear combination of indicator functions of boundary arcs,
$$
\varphi = \sum_{j=1}^k c_k\,\uno_{\partial^{\infty} T_{x_j}}\,,\quad c_j \in \C\,,
$$
and in this case, the arcs can be forced to be pairwise disjoint. For a distribution
$\nu$ as in Definition \ref{def:distribution}, we then set 
\begin{equation}\label{eq:integ}
\int_{\partial^{\infty} T} \varphi\, d\nu = \sum_{j=1}^k c_k\,\nu(\partial^{\infty} T_{x_j})\,.
\end{equation}
As a matter of fact, via this definition, the linear space of all distributions is the dual
of the space of all locally constant functions on $\partial^{\infty}T$, compare with
\cite{Cohen}.

\smallskip

In addition to Lemma \ref{lem:identities}, we now shall need the following, which is specific
to trees.

\begin{lemma}\label{lem:tree-id}
Suppose that $(T, \rho_s)$ is transient.  
Then for every $s \in \mathbb{H}_r\,$ and every pair of neighbours $x,y \in T\,$, 
$$
p_s(x,y) \bigl( 1 - F^{P_s}(x,y)F^{P_s}(y,x)\bigr) = F^{P_s}(x,y) \bigl( 1 - U^{P_s}(y,x) \bigr).
$$
In particular,
$$
F^{P_s}(x,y)F^{P_s}(y,x) \ne 1 \AND F^{P_s}(x,y) \ne 0\,.
$$
\end{lemma}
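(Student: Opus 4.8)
The plan is to prove the identity by a single first‑step decomposition of $F^{P_s}(x,y)$ at the vertex $x$, using parts (c) and (d) of Lemma~\ref{lem:identities} together with the tree structure, and then to read both non‑vanishing assertions off the identity itself; transience enters only at the very end.

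I would start by expanding over the first step of the walk: by Lemma~\ref{lem:identities}(c) (applicable since $x\ne y$) and $F^{P_s}(y,y)=1$,
\[
F^{P_s}(x,y)=\sum_{w\sim x}p_s(x,w)\,F^{P_s}(w,y)=p_s(x,y)+\sum_{\substack{w\sim x\\ w\ne y}}p_s(x,w)\,F^{P_s}(w,y).
\]
Next I would bring in the tree: if $w\sim x$ and $w\ne y$, then $w$ and $y$ lie in distinct components of $T\setminus\{x\}$, so $x$ is a cut vertex between $w$ and $y$, and Lemma~\ref{lem:identities}(d) yields $F^{P_s}(w,y)=F^{P_s}(w,x)\,F^{P_s}(x,y)$. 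Substituting this and factoring out $F^{P_s}(x,y)$,
\[
F^{P_s}(x,y)\Bigl(1-\sum_{\substack{w\sim x\\ w\ne y}}p_s(x,w)F^{P_s}(w,x)\Bigr)=p_s(x,y).
\]
Finally I would recognise the bracketed sum: since $\sum_{w\sim x}p_s(x,w)F^{P_s}(w,x)=U^{P_s}(y,x)=U^{P_s}(x,x)$ (the first‑return generating function at $x$, cf.\ Lemma~\ref{lem:identities}(b)) and the term omitted from the last sum is precisely $p_s(x,y)F^{P_s}(y,x)$, the display becomes
\[
F^{P_s}(x,y)\bigl(1-U^{P_s}(y,x)+p_s(x,y)F^{P_s}(y,x)\bigr)=p_s(x,y),
\]
which rearranges at once to the asserted identity.

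For the two supplementary statements I would argue directly from the identity. Since $[x,y]$ is an edge and $\rho_s$ is positive‑real (Lemma~\ref{lem:estimates}), $p_s(x,y)=\rho_s(x,y)/\rho_s(x)\ne0$, using also $\Re\rho_s(x)>0$. If $F^{P_s}(x,y)=0$, then the right‑hand side $F^{P_s}(x,y)\bigl(1-U^{P_s}(y,x)\bigr)$ of the identity vanishes while the left‑hand side equals $p_s(x,y)\ne0$, a contradiction; hence $F^{P_s}(x,y)\ne0$. If $F^{P_s}(x,y)F^{P_s}(y,x)=1$, then the left‑hand side vanishes, so $F^{P_s}(x,y)\bigl(1-U^{P_s}(y,x)\bigr)=0$, and since $F^{P_s}(x,y)\ne0$ this forces $U^{P_s}(y,x)=1$; but in the transient case $G^{P_s}(x,x)=\rho_s(x)/\P_s(x\to\infty)$ is a finite number, equal to $1/\bigl(1-U^{P_s}(x,x)\bigr)=1/\bigl(1-U^{P_s}(y,x)\bigr)$ by Lemma~\ref{lem:identities}(b), which is impossible when $U^{P_s}(y,x)=1$. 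Hence $F^{P_s}(x,y)F^{P_s}(y,x)\ne1$.

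The argument is short and essentially bookkeeping; the one place needing care — and the only spot where the tree hypothesis is genuinely used — is the observation that $x$ separates every neighbour $w\ne y$ of $x$ from $y$, which is what legitimises the cut‑vertex factorisation of Lemma~\ref{lem:identities}(d); transience is used only to keep $G^{P_s}(x,x)$ finite, i.e.\ $U^{P_s}(y,x)\ne1$. I do not expect any genuine obstacle, since Lemma~\ref{lem:identities} is already available for all $s\in\mathbb{H}_r$.
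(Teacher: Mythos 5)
Your proof is correct, but it takes a genuinely different route from the paper's. The paper disposes of the identity in one line: for real $s>0$ it is \cite[(9.35)]{Woess09} (obtained there by essentially the path decomposition you carry out), and for complex $s\in\mathbb{H}_r$ it then holds by analytic continuation, as in the proof of Theorem~\ref{thm:converge}. You instead derive it directly for every $s\in\mathbb{H}_r$ from Lemma~\ref{lem:identities}: first-step expansion of $F^{P_s}(x,y)$ via (c), the cut-vertex factorisation $F^{P_s}(w,y)=F^{P_s}(w,x)F^{P_s}(x,y)$ for neighbours $w\ne y$ of $x$ via (d) (the only place the tree structure enters), and identification of the resulting sum with $U^{P_s}(x,x)-p_s(x,y)F^{P_s}(y,x)$. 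This buys a self-contained argument in the complex setting --- the analytic continuation is already packaged inside Lemma~\ref{lem:identities}, so no further Montel/identity-theorem step is needed here --- at the cost of re-deriving what \cite{Woess09} supplies for $s>0$. Your handling of the supplementary claims ($p_s(x,y)\ne 0$ from positive-reality of the admittances, $F^{P_s}(x,y)\ne 0$ by evaluating the identity, and $F^{P_s}(x,y)F^{P_s}(y,x)\ne 1$ because transience makes $G^{P_s}(x,x)=1/\bigl(1-U^{P_s}(x,x)\bigr)$ a finite complex number, so $U^{P_s}(x,x)\ne 1$) spells out exactly what the paper leaves implicit. One remark: the symbol $U^{P_s}(y,x)$ in the statement is defined in the paper only on the diagonal; your computation shows it must be read as the first-return quantity $U^{P_s}(x,x)$ (the alternative reading $\sum_w p_s(y,w)F^{P_s}(w,x)=F^{P_s}(y,x)$ would make the identity false already for a biased nearest-neighbour walk on $\mathbb{Z}$), so your tacit identification $U^{P_s}(y,x)=U^{P_s}(x,x)$ is the correct interpretation rather than a gap, though it would be worth saying so explicitly.
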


\begin{proof} For real $s > 0$, the identity is derived in \cite[(9.35)]{Woess09}.
Once more, it must hold for all  $s \in \mathbb{H}_r$ by analytic extension.
\end{proof}

Note that for arbitrary $x,y \in T$, if $[x=x_0\,,x_1\,,\dots, x_n=y]$ is the geodesic
path connecting the two, then the tree structure and Lemma \ref{lem:identities}(c) imply that
\begin{equation}\label{eq:fsplit}
F^{P_s}(x,y) = F^{P_s}(x,x_1)F^{P_s}(x_1,x_2) \cdots F^{P_s}(x_{n-1},y) \ne 0.
\end{equation}

\begin{corollary}\label{cor:gne0}
In the transient case, $G^{P_s}(x,y) \ne 0$ for all  $x,y \in T$.
\end{corollary}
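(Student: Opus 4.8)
The plan is to reduce the statement about $G^{P_s}(x,y)$ to the already-established facts about $F^{P_s}$ and the on-diagonal Green kernel. Recall that in the transient infinite case we have defined $G^{P_s}(x,a) = F^{P_s}(x,a)\,G^{P_s}(a,a)$ via \eqref{eq:allgreen}, with $G^{P_s}(a,a)$ given by \eqref{eq:diag}. So it suffices to show that both factors on the right-hand side are non-zero for all relevant vertices.

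For the factor $F^{P_s}(x,y)$: when $x \ne y$, this is exactly the content of \eqref{eq:fsplit}, which expresses $F^{P_s}(x,y)$ as a product of nearest-neighbour factors $F^{P_s}(x_{i-1},x_i)$ along the geodesic from $x$ to $y$, each of which is non-zero by Lemma \ref{lem:tree-id}. When $x = y$, recall $F^{P_s}(a,a) = 1 \ne 0$. So in all cases $F^{P_s}(x,y) \ne 0$.

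For the factor $G^{P_s}(a,a)$: by \eqref{eq:diag} we have $G^{P_s}(a,a) = \rho_s(a)/\P_s(a \rightarrow \infty)$. The numerator $\rho_s(a) = \sum_y \rho_s(a,y)$ has strictly positive real part (since each admittance is positive-real), hence is non-zero. The denominator $\P_s(a \rightarrow \infty)$ is non-zero by the very definition of transience (Definition \ref{def:transience}) combined with Theorem \ref{thm:transience}(a), which guarantees that transience is independent of $a$ and $s$, so $\P_s(a \rightarrow \infty) \ne 0$ for every $a$ and every $s \in \mathbb{H}_r$. Therefore $G^{P_s}(a,a) \ne 0$.

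Combining the two parts, $G^{P_s}(x,y) = F^{P_s}(x,y)\,G^{P_s}(y,y)$ is a product of two non-zero complex numbers, hence non-zero, for all $x,y \in T$. There is no real obstacle here: the only subtlety is bookkeeping, namely making sure to invoke \eqref{eq:fsplit} (rather than just Lemma \ref{lem:tree-id}, which only covers neighbours) for the off-diagonal $F$-factor, and to handle the trivial diagonal case $x=y$ separately using $F^{P_s}(a,a)=1$. Everything else is a direct appeal to the definitions \eqref{eq:allgreen}, \eqref{eq:diag} and the positivity of $\Re \rho_s(a)$.
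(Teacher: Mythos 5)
Your proof is correct and follows exactly the route the paper intends: factor $G^{P_s}(x,y) = F^{P_s}(x,y)\,G^{P_s}(y,y)$ via \eqref{eq:allgreen}, get $F^{P_s}(x,y)\ne 0$ from \eqref{eq:fsplit} together with Lemma \ref{lem:tree-id}, and get $G^{P_s}(y,y)\ne 0$ from \eqref{eq:diag} since $\Re\rho_s(y)>0$ and $\P_s(y\to\infty)\ne 0$ by transience. Nothing is missing.
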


At this point, we can define the \emph{Martin kernel} as in the stochastic case:
$$
K^{P_s}(x,\xi) = \lim_{y \to \xi} \frac{G^{P_s}(x,y)}{G^{P_s}(o,y)} 
= \frac{F^{P_s}(x,x \wedge \xi)}{F^{P_s}(o,x \wedge \xi)}\,,
\quad (x,\xi) \in T \times \partial^{\infty}T\,.
$$
The second identity follows from \eqref{eq:fsplit}. Note that for any fixed $x \in T$,
the function $\xi \mapsto K(x,\xi)$ is locally constant on $\partial^{\infty}T\,$.
We now get the following extension of a result which is well-known in the stochastic case.

\begin{theorem}\label{thm:martin}
Harmonic functions are in one-to-one correspondence with distributions on $\partial^{\infty}T\,$:
for every harmonic function $h$ on $T$ with respect to $P_s$ ($s \in \mathbb{H}_r$) , 
there is a unique distribution $\nu^h$ on $\partial^{\infty} T$ such that
$$
h(x) = \int_{\partial^{\infty} T}  K^{P_s}(x,\xi)\, d\nu^h(\xi) \quad \text{for all }\; x \in T\,.
$$
The distribution is given by 
$$
\nu^h(\partial^{\infty} T_x) 
= F^{P_s}(o,x)\, \frac{h(x)- F^{P_s}(x,x^-)h(x^-)}{1-F^{P_s}(x,x^-)F^{P_s}(x^-,x)}\,,\quad 
x \in T \setminus \{o\}\,,
$$
and $\nu^h(\partial^{\infty} T)=h(o)$.
\end{theorem}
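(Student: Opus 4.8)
The plan is to proceed exactly as in the stochastic theory (see e.g.\ \cite[\S 9.C]{Woess09}), using that all the relevant identities have already been transported to complex $s$ via Lemmas \ref{lem:identities} and \ref{lem:tree-id}. First I would check that the stated $\nu^h$ is a well-defined distribution. Fix $h$ harmonic and define, for $x \in T \setminus \{o\}$,
$$
\nu^h(\partial^{\infty} T_x) = F^{P_s}(o,x)\,\frac{h(x) - F^{P_s}(x,x^-)h(x^-)}{1 - F^{P_s}(x,x^-)F^{P_s}(x^-,x)}\,,
$$
which is legitimate since the denominator is nonzero by Lemma \ref{lem:tree-id}, and set $\nu^h(\partial^{\infty}T)=h(o)$. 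To verify the additivity $\nu^h(\partial^{\infty}T_x)=\sum_{y^-=x}\nu^h(\partial^{\infty}T_y)$, I would substitute the definition on the right-hand side, use \eqref{eq:fsplit} to write $F^{P_s}(o,y)=F^{P_s}(o,x)F^{P_s}(x,y)$ for each successor $y$ of $x$, and then invoke harmonicity at $x$, namely $h(x) = \sum_{y \sim x} p_s(x,y)h(y)$, together with the tree identity of Lemma \ref{lem:tree-id} in the form $p_s(x,y)(1-F^{P_s}(x,y)F^{P_s}(y,x)) = F^{P_s}(x,y)(1-U^{P_s}(y,x))$ applied to the pair $(y^-,y)=(x,y)$. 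Separating the predecessor term from the successor terms in the harmonicity relation, this is a purely algebraic manipulation that yields the additivity; the base case $x \sim o$ is handled the same way using $\nu^h(\partial^{\infty}T)=h(o)$ and the decomposition $h(o)=\sum_{y\sim o}p_s(o,y)h(y)$.

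Next I would prove the integral representation. Since $\xi \mapsto K^{P_s}(x,\xi)$ is locally constant, it suffices to compute $\int_{\partial^{\infty}T} K^{P_s}(x,\xi)\,d\nu^h(\xi)$ using \eqref{eq:integ} by refining the partition of $\partial^{\infty}T$ into the arcs $\partial^{\infty}T_y$ with $|y|=N$ for $N$ large. On such an arc with $x \in \pi(o,y)$ we have $x \wedge \xi = x$, so $K^{P_s}(x,\xi) = F^{P_s}(x,x)/F^{P_s}(o,x) = 1/F^{P_s}(o,x)$; on the arcs $\partial^{\infty}T_y$ not containing $x$ (for $|y|$ large enough, say $|y|>|x|$ and $y$ lying beyond the confluent) $K^{P_s}(x,\xi)$ is constant equal to $F^{P_s}(x, x\wedge y)/F^{P_s}(o, x\wedge y)$. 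Summing $\nu^h$ over each fibre using the additivity I just established collapses the sum telescopically, and a direct computation — substituting the formula for $\nu^h$ and using \eqref{eq:fsplit} — gives $h(x)$ on the nose. This is the standard telescoping argument; I would organize it by first summing over the successors of $x$ (contributing the arcs "below" $x$) and then over the siblings encountered along $\pi(o,x)$ (contributing the arcs "off to the side"), checking that the two groups of terms reassemble into $F^{P_s}(x,x^-)h(x^-)$-type expressions that cancel down to $h(x)$.

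Finally, uniqueness: if $\nu$ is any distribution with $h(x)=\int K^{P_s}(x,\xi)\,d\nu(\xi)$ for all $x$, then evaluating at $x=o$ gives $\nu(\partial^{\infty}T) = h(o)$ since $K^{P_s}(o,\cdot)\equiv 1$, and evaluating along $\pi(o,x)$ and taking successive differences forces $\nu(\partial^{\infty}T_x)$ to equal the stated formula; concretely, the map $h \mapsto \nu^h$ I constructed is a left inverse, and the representation formula shows it is also a right inverse on the image, so the correspondence is bijective. I expect the main obstacle to be purely bookkeeping: making the telescoping sum in the integral computation airtight, in particular handling the arcs that split off along the geodesic $\pi(o,x)$ and confirming all signs and $F^{P_s}$-factors combine correctly. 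There is no analytic difficulty here — finite additivity means every integral in sight is a finite sum, and every algebraic identity needed is already available for complex $s$ by the analytic-continuation arguments of Lemma \ref{lem:identities} and Lemma \ref{lem:tree-id}.
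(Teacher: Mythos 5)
Your proposal follows essentially the same route as the paper: the paper's proof of Theorem \ref{thm:martin} is a one-line deferral to \cite[Theorem 9.36]{Woess09} (going back to \cite{Cartier}), asserting that the stochastic argument carries over verbatim because the needed identities — Lemma \ref{lem:identities}, Lemma \ref{lem:tree-id}, \eqref{eq:fsplit}, Corollary \ref{cor:gne0} — have already been extended to complex $s$, and your outline is precisely that argument with the same ingredients. The only point worth adding to nail the full ``one-to-one correspondence'' (rather than just existence and uniqueness of $\nu^h$ for harmonic $h$) is that $x\mapsto K^{P_s}(x,\xi)$ is itself harmonic, which follows from Lemma \ref{lem:identities}(c) together with Lemma \ref{lem:tree-id}, so that $\int_{\partial^{\infty}T} K^{P_s}(\cdot,\xi)\,d\nu(\xi)$ is harmonic for an \emph{arbitrary} distribution $\nu$, not only for those already known to lie in the image of $h\mapsto\nu^h$.
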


The proof is exactly as in \cite[Theorem 9.36]{Woess09}. It goes back to \cite{Cartier}.

More generally, for $\lambda \in \C$, a function $h: T \to \C$ is called \emph{$\lambda$-harmonic}
with respect to $(T, \rho_s)$ if $P_sh = \lambda\cdot h$. For suitable values of $\lambda$,
the above extends to $\lambda$-harmonic functions. Namely, if for $t > 0$
\begin{equation}\label{eq:range}
|\lambda| > r_{s,t}\,\lambda(\PQ_{t|V^a}) 
\quad \text{or} \quad |\lambda| > r_s\,\lambda(\wt{\!\PQ}_{s|V^a})
\quad \text{or} \quad |\lambda| > r_s\,\lambda(\wc{\!\PQ}_{s|V^a})
\end{equation}
then we can use comparison and work with $G^{P_s}(x,y|1/\lambda)$ and $F^{P_s}(x,y|1/\lambda)$.
The associated Martin kernel is then
$$
K^{P_s}(x,\xi|\lambda) 
= \frac{F^{P_s}(x,x \wedge \xi|1/\lambda)}{F^{P_s}(o,x \wedge \xi|1/\lambda)}\,,
\quad (x,\xi) \in T \times \partial^{\infty}T\,.
$$
In this case, the arguments of \cite[(9.35)]{Woess09} that lead to Lemma \ref{lem:tree-id}
can be applied directly via ``path composition'' as in that reference, and 
one gets 
$$
p_s(x,y) \bigl( 1 - F^{P_s}(x,y|1/\lambda)F^{P_s}(y,x1/\lambda)\bigr) 
= \lambda\, F^{P_s}(x,y|1/\lambda) \bigl( 1 - U^{P_s}(y,x|1/\lambda) \bigr).
$$
Thereafter, everything works as in \cite{PicardelloWoess} (with a little care
concerning the slightly different notation), and one gets the 
analogue of Theorem \ref{thm:martin} with $\nu^h$ as in that theorem, replacing the 
appearing terms $F^{P_s}(\cdot,\cdot)$ with $F^{P_s}(\cdot,\cdot|1/\lambda)$.
Following the methods of \cite{PicardelloWoess}, one also gets boundary
integral representations of \emph{$\lambda$-polyharmonic functions} for complex $\lambda$
in the range of \eqref{eq:range}.

However, in general $\lambda = 1$ does not belong to that range, unless the stochastic
operators have spectral radius strictly $< 1$ and $|s|/\Re s$ is sufficiently close to $1$.
One of the future issues is to understand if and how the gap between $\lambda =1$ and 
$\lambda$ in the range of \eqref{eq:range} can be bridged. The (finite) Example \ref{ex:ample}
indicates that this will not always be possible.

\medskip

\textbf{B. Free groups}

We consider the case when $V = \Gamma$ is a finitely generated group and 
$A$ is a finite, symmetric set of generators of $\Gamma$ which does not
contain the group identity $e$. The Cayley graph of $\Gamma$ has vertex
set $\Gamma$, and two vertices $x, y$ are neighbours if and only if $x^{-1}y \in A$.
Then it is natural to require that the edge admittances \eqref{eq:admittance}
satisfy
$$
\rho_s(x,y) = \rho_s(e, x^{-1}y)\,,
$$
so that $a \mapsto \rho_s(e,a)$ is a non-zero, symmetric function $A \to \C$.
We then have $\rho_s(x) = \rho_s(e) = \sum_{a \in A} \rho_s(e,a)$ 
for the total admittance at any group element (vertex) $x$. We get that
$$
p_s(x,y) = \mu_s(x^{-1}y)\,, \quad \text{where}\quad \mu_s(x) = \frac{\rho_s(e,x)}{\rho_s(e)}
$$
is a symmetric, complex measure supported by $A$ with total sum $1$. 
The transition operator $P_s$ is then the right convolution operator by $\mu_s\,$, and in 
the subsequent notation, we shall alsways refer to $\mu_s$ in the place of $P_s\,$. It is 
natural to consider the action on $\ell^2(\Gamma)$, the 
Hilbert space of all square summable
complex functions on $\Gamma$. The operator is symmetric, but not self-adjoint unless $s > 0$.
We are interested in its norm $\|\mu_s\|_{\ell^2}$ and its operator spectral radius
$$
\lambda_{\ell^2}(\mu_s) = \lim_{n \to \infty} \|\mu_s^{(n)}\|_{\ell^2}^{1/n}\,, 
$$
where $\mu_s^{(n)}$ is the $n^{\text{th}}$ convolution power of $\mu_s$. 
For  the ``spectral radius'' $\lambda(\mu_s)=\lambda(P_s)$ defined 
in \eqref{eq:lam}, we have 
$$
\lambda(\mu_s) \le \lambda_{\ell^2}(\mu_s) \le \|\mu_s\|_{\ell^2}\,,\quad s \in \mathbb{H}_r\,.
$$
When $s > 0$, the three numbers coincide, with $\lambda(\mu_s)$ being  
the associated Markov chain spectral radius \eqref{eq:specrad}, and $\mu_s$ is of course
a probability measure on $\Gamma$.

For any $s \in \mathbb{H}_r$, when $|z| < 1/ \lambda(\mu_s)$, we get convergence
of the power series $G^{\mu_s}(x,y|z)$. This holds in particular, when 
$|z| < 1/ \|\mu_s\|_{\ell^2}\,$.

\smallskip

We now consider the case when $\Gamma$ is a \emph{free group} with free 
generators $a_1\,,\dots, a_k$ ($k \ge 2$). We set $a_{-j} = a_j^{-1}$ 
and $A = \{ a_{\pm j} : j = 1, \dots, k\}$ for our symmetric generating set.
Recall that $\Gamma$ consists of all \emph{reduced words}
$$
x = a_{j_1}a_{j_2} \cdots a_{j_n}\,,\quad n \ge 0\,,\; 
j_l \in J = \{\pm 1, \dots, \pm k\}, j_l \ne -j_{l-1}\,. 
$$
When $n =0$, this is the empty word, which stands for the group identity $e$.
The group operation is concatenation of words followed by reduction, i.e.,
cancellation of successive ``letter'' pairs $a_ja_{-j}\,$, $j \in J$.

The Cayley graph of $\Gamma$ with respect to $A$ is the regular tree where 
each vertex has $2k$ neighbours.  It is very well known since \cite{Kesten} that 
$\lambda(\mu_s) < 1$ in the stochastic case $s > 0$, and we have transience. 
In particular, the results of the preceding sub-section apply here.
The following important
result is due to \cite{AkOs}; for a simple ``random walk'' proof, see
\cite{Woess86}.

\begin{proposition}\label{pro:AkOs}
For any $s \in \mathbb{H}_r\,$,
$$
\|\mu_s\|_{\ell^2} = 
2 \cdot \min \biggl\{ t + \sum_{j=1}^k \Bigl( \sqrt{t^2 + |\mu_s(a_j)|^2} - t\Bigr) : t \ge 0 \biggr\}.
$$
\end{proposition}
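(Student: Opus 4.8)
The plan is to reduce the complex‑weighted case to the classical norm computation for free convolution operators with \emph{non‑negative} weights, and then to carry out that computation by the ``random walk'' method, which is available to us because the relevant first–passage generating functions on the tree were already produced in Lemma \ref{lem:identities}.

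\emph{Reduction to non‑negative weights.} Writing $h_j=\lambda(a_j)+\lambda(a_j^{-1})$ for the regular representation, the operator $\mu_s$ acts on $\ell^2(\Gamma)$ as $\sum_{j=1}^k\mu_s(a_j)\,h_j$, where $h_1,\dots,h_k$ are freely independent self‑adjoint operators, each with the (symmetric) arcsine distribution on $[-2,2]$. The first point to establish is that $\|\mu_s\|_{\ell^2}$ depends only on the moduli $c_j=|\mu_s(a_j)|$. Conjugating by a gauge unitary on $\ell^2(\Gamma)$ — multiplication by a character of $\Gamma$ — one may arrange that the coefficient of $\lambda(a_j)$ equals $c_j$; what remains is $\sum_j c_j\bigl(\lambda(a_j)+e^{2i\theta_j}\lambda(a_j^{-1})\bigr)=\sum_j c_j\,e^{i\theta_j}(v_j+v_j^{*})$, with $\theta_j=\arg\mu_s(a_j)$ and $v_j=e^{-i\theta_j}\lambda(a_j)$ again free Haar unitaries. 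That the phases $e^{i\theta_j}$ may now be dropped without changing the norm is the delicate part: further gauge conjugation leaves the product $\mu_s(a_j)^2$ of the two coefficients of $h_j$ invariant, so the phases cannot be removed by a mere change of representation, and one has to exploit the rigidity of the arcsine law inside the reduced free product; this is exactly the input supplied by \cite{AkOs}. Granting it, $\|\mu_s\|_{\ell^2}=\bigl\|\sum_j c_j h_j\bigr\|_{\ell^2}$.

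\emph{The non‑negative case as a Markov operator on the tree.} Set $C=\sum_j c_j$ and $A=\sum_j c_j h_j$. Then $A=2C\,\PQ$, where $\PQ$ is the reversible nearest‑neighbour transition operator on the tree $\Gamma=F_k$ with $\pq(e,a_j^{\pm1})=c_j/(2C)$. Since $A$, hence $\PQ$, is self‑adjoint on $\ell^2(F_k)$, we have $\|A\|_{\ell^2}=2C\,\lambda(\PQ)$, and by \eqref{eq:specrad} the spectral radius $\lambda(\PQ)$ is the reciprocal of the radius of convergence $R$ of $G^{\PQ}(e,e|z)$. On the tree every vertex $\neq e$ is a cut vertex, so by Lemma \ref{lem:identities} the first‑passage functions $F^{\PQ}(a_j^{\pm1},e|z)=:\phi_j(z)$ satisfy a quadratic coming from a one‑step decomposition; eliminating the $\phi_j$ and writing $w(z):=1-U^{\PQ}(e,e|z)$ one gets
\[
G^{\PQ}(e,e|z)=\frac1{w(z)},\qquad 1+(k-1)\,w(z)=\sum_{j=1}^k\sqrt{w(z)^2+4\mu_j^2 z^2}\,,\quad \mu_j=\frac{c_j}{2C}\,,
\]
and the singularity of $G^{\PQ}$ at $z=R$ is the branch point of the algebraic function $w$, i.e.\ the point where $z'(w)=0$.

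\emph{Extracting the formula.} Substituting $t=w(R)\,C/R$ into the two relations above turns them into $C/R=\sum_j\sqrt{t^2+c_j^2}-(k-1)t$ and $k-1=\sum_j t/\sqrt{t^2+c_j^2}$; the second of these is precisely the first‑order condition for $t$ to minimise the convex function $g(t)=\sum_j\sqrt{t^2+c_j^2}-(k-1)t$ on $t\ge 0$. Hence $C/R=\min_{t\ge 0}g(t)$, and therefore
\[
\|\mu_s\|_{\ell^2}=\frac{2C}{R}=2\,\min_{t\ge 0}\Bigl\{\,t+\sum_{j=1}^k\bigl(\sqrt{t^2+|\mu_s(a_j)|^2}-t\bigr)\Bigr\}\,.
\]
The tree computation in the last two steps is routine and is essentially the argument of \cite{Woess86}; I expect the only genuine obstacle to be the phase‑independence used in the reduction step, i.e.\ verifying that $\arg\mu_s(a_j)$ does not enter the norm, which is where the freeness of the $h_j$ and the evenness of the arcsine law have to be used carefully.
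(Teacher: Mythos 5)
You should first be aware that the paper does not prove this proposition at all: it is stated as a known result, attributed to \cite{AkOs}, with \cite{Woess86} cited for a ``random walk'' proof. So the only meaningful comparison is between your sketch and the Woess--Akemann--Ostrand arguments you are reconstructing. Your second and third steps are correct and are exactly the generating-function argument of \cite{Woess86}: the quadratic $\mu_j z F_j^2+(1-U)F_j-\mu_j z=0$ for the first-passage functions, the algebraic relation $1+(k-1)w=\sum_j\sqrt{w^2+4\mu_j^2z^2}$, the identification of the singularity of $G$ at $z=R$ with the branch point $z'(w)=0$, and the observation that the branch-point condition is the first-order condition for the convex function $g$ -- I checked the substitution $t=w(R)C/R$ and the two displayed relations, and they are right (one should add the one-line remark that $g'(0)=-(k-1)<0$ and $g'(t)\to 1$ as $t\to\infty$, so the minimum is attained at an interior critical point, and that $w(R)>0$).

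The genuine gap is the one you yourself flag: the reduction to non-negative coefficients. Because $\mu_s$ assigns the \emph{same} complex number $\mu_s(a_j)$ to $a_j$ and $a_j^{-1}$ (rather than conjugate numbers), the operator is not self-adjoint, and, as you correctly observe, the gauge action $\lambda(a_j)\mapsto z_j\lambda(a_j)$ preserves the product of the two coefficients, namely $\mu_s(a_j)^2$, so the phases cannot be conjugated away. That phase-independence of the norm is precisely the content of the proposition that goes beyond the classical probabilistic case, and writing ``this is exactly the input supplied by \cite{AkOs}; granting it\ldots'' assumes the essential part of the conclusion. As a self-contained proof your proposal therefore establishes only the case of non-negative (equivalently, collinear-phase) weights; for the full statement it is, like the paper, ultimately a citation of Akemann--Ostrand. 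If you want to close the gap along random-walk lines, the natural route is to compute $\|\mu_s\|^2=\|\mu_s^*\ast\mu_s\|$ and analyse the resulting length-$\le 2$ convolution operator, but note that the phases $e^{i(\theta_j-\theta_i)}$ genuinely survive in $\mu_s^*\ast\mu_s$ after any gauge conjugation, so this still requires real work and is not ``routine.''
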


In particular, the norm is the same as for  $|\mu|_s$, where $|\mu|_s(x) = |\mu_s(x)|$. The latter
is in general not a probability measure, its total mass is $\ge 1$. Thus, we may have 
$\|\mu_s\|_{\ell^2} \ge 1$ when $s$ is complex. We have by Proposition \ref{pro:le}
\begin{equation}\label{eq:ineq2}
|\mu|_s(\Gamma) = 2\sum_{j=1}^k |\mu_s(a_j)| 
= \frac{\sum_{j}|\rho_s(e,a_j)|}{\Bigl|\sum_j \rho_s(e,a_j)\Bigr|} \le \frac{|s|}{\Re s},
\end{equation}
and since $\frac{1}{|\mu|_s(\Gamma)}|\mu|_s$ is a probability measure, its operator norm
(= spectral radius) is $< 1$. The stochastic transition operator induced by this probability
measure is $\wc{\!\PQ}_s\,$. Again, if $\Re s/|s|$ is sufficiently close to $1$, we can 
use the comparison method described in the previous sections, including the Green kernel 
at $z = 1$. As a matter of fact, this applies to any non-amenable group, but
here we have a specific formula for the norm.

\begin{example}\label{exa:mple2}
We suppose that $\rho_s(e,a_j)=\rho_s(e,a_{-j}) \in \{ 1, s, 1/s \}$ and that $s = e^{i\alpha}$ 
with $|\alpha| < \pi/2$.

Note that then admittance $=1$ means $R=1,\ L=D=0$, admittance $=s$ means
$D=1,\ L=R=0$, and admittance $=1/s$ means $L=1,\ R=D=0$. Let  
$$
\begin{aligned}
&|\{ j \in \{1, \dots, k\} : \rho_s(e,a_j)=s  \}| = l_1\,,\\
&|\{ j \in \{1, \dots, k\} : \rho_s(e,a_j)=1/s  \}| = l_2\,,\AND\\
&|\{ j \in \{1, \dots, k\} : \rho_s(e,a_j)=1  \}| = l_3 = k - l_1-l_2\,.
\end{aligned}
$$
Then $\frac{1}{|\mu|_s(\Gamma)}|\mu|_s$ is equidistribution on $A$, and it is very well known
that the norm of the associated convolution operator, i.e., the spectral radius of simple
random walk is  $\sqrt{2k-1}/k$. Consequently,
$$
\|\mu_s\|_{\ell^2} = \frac{\sqrt{2k-1}}{|l_1\, s + l_2/s + l_3|}\,. 
$$
If $l_3 > l_1 + l_2$ then 
$$
\|\mu_s\|_{\ell^2} \le \frac{\sqrt{2k-1}}{k - 2(l_1+l_2)}\,, 
$$
and if $l_1+l_2$ is fixed, then this will be $< 1$ for $k$ sufficiently large,
so that the Green kernel $G^{\mu_s}(x,y|z)$ is defined via the corresponding power series 
for complex $z$ in an open disk around the origin that contains $z=1$. 

The same is true when $k$ is small and the angle $\alpha$ is sufficiently close to $0$.
For example, when $k=2$ and $l_1=l_2=1$ then $\|\mu_s\|_{\ell^2} = \sqrt{3}/(2\cos \alpha)$
which is $<1$ when $|\alpha| < \pi/6$. 

Of course, the general estimate \eqref{eq:ineq2} yields a smaller range of angles 
$\alpha$ for which one obtains $\|\mu_s\|_{\ell^2} < 1$, like in the finite network
of Example \ref{ex:ample}.

In all those cases, the power series representation of the Green kernel in a neighbourhood of $z=1$ 
allows to  derive a variety of further results, such as the study of polyharmonic 
functions as in \cite{PicardelloWoess}.
\end{example}


\begin{thebibliography}{99}


\bibitem{Ahl}
Lars V. Ahlfors:
\emph{Complex analysis. An introduction to the theory of analytic functions of one complex variable.}  
3rd ed., McGraw-Hill, New York, 1978.
\newblock ISBN 0070006571

\bibitem{AkOs} Charles A. Akemann and Phillip A. Ostrand: \emph{Computing norms in group 
$C^*$-algebras.} Amer. J. Math. 98(4):  1015--1047, 1976.
\\ \url{DOI https://doi.org/10.2307/2374039}


\bibitem{AlonsoRuiz}
Patricia Alonso~Ruiz:
\newblock Power dissipation in fractal {Feynman}-{Sierpinski} {A}{C} circuits.
\emph{Journal of Mathematical Physics} 58: 215--237, 2017.
\\ \url{http://dx.doi.org/10.1063/1.4994197}

\bibitem{Baez}
John~C. Baez and Brendon Fong:
\emph{A compositional framework for passive linear networks.}
 Theory Appl. Categ. 33(38): 1158--1222, 2018. 
\\ \url{http://www.tac.mta.ca/tac/volumes/33/38/33-38abs.html}

\bibitem{Brune}
Otto Brune:
\emph{Synthesis of a finite two-terminal network whose driving-point
  impedance is a prescribed function of frequency.} Thesis (Sc. D.),
Massachusetts Institute of Technology, Dept. of Electrical
  Engineering, Massachusetts, 1931.
\\ \url{http://dx.doi.org/10.1002/sapm1931101191}


\bibitem{Cartier} Pierre Cartier: \emph{Fonctions harmoniques sur und arbre}.
Symposia Math. 9: 203--270, 1972.

\bibitem{ChenTeplyaev}
Joe~P. Chen, Luke~G. Rogers, Loren Anderson, Ulysses Andrews, Antoni Brzoska,
  Aubrey Coffey, Hannah Davis, Lee Fisher, Madeline Hansalik, Stephew Loew, and
  Alexander Teplyaev:
\emph{Power dissipation in fractal {A}{C} circuits.}
Journal of Physics A: Mathematical and Theoretical, 50(32): 325205, 20 pp.,
  2017.
\\ \url{http://dx.doi.org/10.1088/1751-8121/aa7a66}

\bibitem{Chung}
Fan R.~K. Chung:
\emph{\em Spectral Graph Theory}.
CMBS -- Regional Conference Series in Mathematics. AMS Vol. 92,
  Providence, Rhode Island, 1997.
\\ \url{http://dx.doi.org/10.1090/cbms/092}

\bibitem{Cohen} Joel M. Cohen, Flavia Colonna and David Singman: \emph{Distributions 
and measures on the boundary of a tree.} J. Math.
Anal. Appl. 293(1):  89--107, 2004.
\\ \url{https://doi.org/10.1016/j.jmaa.2003.12.015}

\bibitem{Conway} John B. Conway: \emph{Functions of one Complex Variable.} 2nd edition, Springer,
New York, 1978.

\bibitem{Draper} Travis G. Draper, Fernando Guevara Vasquez,  Justin Cheuk-Lum Tse,  E. Wallengren Toren,  and Kenneth Zheng. 
\emph{Matrix valued inverse problems on graphs with application to mass-spring-damper systems.} 
 Netw. Heterog. Media 15(1): 1--28, 2020. 
\\ \url{https://doi.org/10.3934/nhm.2020001}

\bibitem{DS} Peter G. Doyle and J. Laurie Snell: 
\newblock {\em Random Walks and Electric Networks.}
\newblock Carus Mathematical Monographs, 22. Mathematical Association of America, 
Washington, DC, 1984.
\\ \url{https://doi.org/10.5948/UPO9781614440222}

\bibitem{HiWo} Thomas Hirschler and Wolfgang Woess:
\emph{Laplace and bi-Laplace equations for directed networks and Markov chains.} 
Expositiones Math. 39(2): 271--301, 2021. 
\\ \url{https://doi.org/10.1016/j.exmath.2021.04.001}

\bibitem{Kesten} Harry Kesten: \emph{Symmetric random walks on groups.} 
Trans. Amer. Math. Soc. 92: 336--354, 1959. 
\\ \url{https://doi.org/10.1090/S0002-9947-1959-0109367-6}

\bibitem{LP} Russell Lyons with Yuval Peres: 
\newblock {\em  Probability on Trees and Networks.} 
\newblock Cambridge University Press, New York, 2016.
\\ \url{https://doi.org/10.1017/9781316672815}

\bibitem{MuranovaThesis}
Anna Muranova:
\emph{On the notion of effective impedance for finite and infinite
  networks}. PhD Thesis, Bielefeld University, 2020.
\\ \url{https://doi.org/10.4119/unibi/2940793}

\bibitem{Muranova1}
Anna Muranova:
\emph{On the notion of effective impedance.}
Operators and Matrices 14(3): 723--741, 2020.
\\ \url{http://dx.doi.org/10.7153/oam-2020-14-46}

\bibitem{Muranova3} Anna Muranova:  
\newblock{On the effective impedance of finite and infinite networks}. 
\newblock{\em Potential Analysis}, in print.
\\ \url{https://doi.org/10.1007/s11118-021-09901-8}

\bibitem{PicardelloWoess}  Massimo A. Picardello and Wolfgang Woess: 
\emph{Boundary representations of $\lambda$-harmonic and polyharmonic functions on trees}. 
Potential Analysis 51(4): 541--561, 2019.
\\ \url{https://doi.org/10.1007/s11118-018-9723-5}

\bibitem{Seneta}  Eugene Seneta: \emph{Nonnegative matrices and Markov chains.} 
Second edition. Springer Series in Statistics. Springer-Verlag, New York, 1981
\\ \url{https://doi.org/10.1007\%2F0-387-32792-4}

\bibitem{Woess86}
Wolfgang Woess: \emph{A short computation of the norms of free convolution operators.} 
Proc. Amer. Math. Soc. 96(1): 167--170, 1986.
\\ \url{https://doi.org/10.1090/S0002-9939-1986-0813831-3}

\bibitem{Woess00}
Wolfgang Woess:
\emph{Random Walks on Infinite Graphs and Groups.}
Cambridge Tracts in Mathematics, 138. Cambridge University Press, 2000.
\\ \url{http://dx.doi.org/10.1017/CBO9780511470967}

\bibitem{Woess09} Wolfgang Woess: 
\emph{Denumerable Markov Chains.}
European Math. Soc., 2009.
\\ \url{http://dx.doi.org/10.4171/071}

\bibitem{Zem} Armen H. Zemanian: \newblock {\em Infinite electrical networks.} 
\newblock Cambridge Tracts in Mathematics, 101. Cambridge University Press, Cambridge, 1991.
\\ \url{https://doi.org/10.1017/CBO9780511895432}







\end{thebibliography}
\end{document}